\documentclass[12pt]{article}
\usepackage[margin=2.5cm]{geometry}
\usepackage{setspace}
\usepackage{amsmath, amssymb, amsthm, mathrsfs}
\usepackage{bbm}
\usepackage{graphicx}
\usepackage{xcolor}
\usepackage{tikz}
\usepackage{pgfplots}
\usepackage{booktabs}
\usepackage{float}
\usepackage[numbers,authoryear]{natbib}
\setcitestyle{authoryear,open={(},close={)}}
\usepackage[hidelinks]{hyperref}
\theoremstyle{plain}
\newtheorem{theorem}{Theorem}
\newtheorem{proposition}{Proposition}
\newtheorem{lemma}{Lemma}
\newtheorem{corollary}{Corollary}
\theoremstyle{definition}
\newtheorem{definition}{Definition}
\newtheorem{axiom}{Axiom}
\newtheorem{example}{Example}
\theoremstyle{remark}
\newtheorem{remark}{Remark}

\definecolor{lightcoral}{rgb}{0.94,0.5,0.5}
\definecolor{indianred}{rgb}{0.8,0.36,0.36}
\definecolor{firebrick}{rgb}{0.7,0.13,0.13}
\definecolor{lightblue}{rgb}{0.68,0.85,0.9}
\definecolor{cornflowerblue}{rgb}{0.39,0.58,0.93}
\definecolor{royalblue}{rgb}{0.25,0.41,0.88}

\title{Integrating Proportionality and Egalitarianism in Claims Problems\thanks{We are grateful to William Thomson, Hervé Moulin, Clemens Puppe, Juan Moreno-Ternero, Ruben Juarez, and Alan Miller for their insightful comments and valuable suggestions. We also thank participants at SSCW 2024 – the 17th Meeting of the Society for Social Choice and Welfare, the Workshop on Social Choice, Game Theory, and Mechanism Design organized in honor of Hervé Moulin’s 75th birthday at the University of Glasgow, as well as seminar audiences at California State University, East Bay and the University of Texas Rio Grande Valley for helpful feedback and discussions.}}

\author{Anisha Bandyopadhyay \thanks{Department of Humanities and Social Sciences, Indian Institute of Technology Delhi, Hauz Khas, New Delhi 110016, India, E\textendash mail: aanishabanerjee@gmail.com} \and Sinan Ertemel \thanks{Department of Economics, Istanbul Technical University, 34367, Ma\c{c}ka, Istanbul, Türkiye, E\textendash mail: ertemels@itu.edu.tr} \and Rajnish Kumar \thanks{Queen's Business School, Queen's University Belfast, 185 Stranmillis Road, Belfast, BT9 5EE, United Kingdom, E\textendash mail: rajnish.kumar@qub.ac.uk}\and Saptarshi Mukherjee \thanks{Department of Humanities and Social Sciences, Indian Institute of Technology Delhi, Hauz Khas, New Delhi 110016, India, E\textendash mail: saptarshi.isi@gmail.com}}

\date{\today}
\begin{document}
\onehalfspacing
%\doublespacing

\maketitle

\begin{abstract}

\noindent We study the problem of allocating a finite estate among agents whose total claims exceed the available resources, a standard framework in the theory of claims problems. Two canonical rules embody competing fairness ideals: the Proportional rule allocates in proportion to claims, while the Constrained Equal Awards (CEA) rule equalizes awards as much as possible subject to claim-boundedness. We introduce the P--CEA family of compromise rules, which assigns each agent a fixed baseline award—capped by her claim—and distributes the remaining estate proportionally to residual claims. By varying the baseline parameter, this family generates a continuum of allocation rules that interpolates between the Proportional and CEA benchmarks. We provide an axiomatic characterization based on two threshold-dependent principles: \emph{No Advantageous Reallocation}, which prevents agents with claims above the threshold from benefiting through coordinated claim redistribution that preserves the threshold condition, and \emph{Sustainable Lower Bound}, which guarantees each agent at least the minimum of her claim and the threshold. We further develop a dual analysis that reallocates losses instead of awards and characterize the corresponding dual family using the dual versions of our axioms.

\vspace{1em}
\noindent \textbf{JEL Classifications:} C71; D63; D61; H23

\vspace{0.5em}
\noindent \textbf{Keywords:}  Claims problem; Fair division; Resource allocation; Compromise rules; Proportionality; Constrained equal awards; Axiomatic foundations; Inequality measures

\end{abstract}

\section{Introduction}

The problem of dividing a finite estate among agents whose total claims exceed the available resources is known as a \emph{claims problem}. This setting occupies a central place in fair division theory, which seeks allocation rules that are both normatively compelling and operationally feasible. Among the most prominent division rules are the \emph{Proportional rule} and the \emph{Constrained Equal Awards (CEA) rule}. The proportional rule allocates resources in proportion to claims, reflecting a merit-based view of fairness. The CEA rule, in contrast, equalizes awards as far as possible subject to claim-boundedness, embodying an egalitarian ideal. While each rule captures a powerful normative intuition, both have well-known limitations: proportionality may disadvantage agents with small claims, whereas CEA may insufficiently account for differences in entitlement.

A natural question, therefore, is whether one can design principled compromise rules that combine the strengths of both approaches. In this paper, we propose such a family of rules, which we call the \emph{P--CEA family}. Each rule in this family assigns every agent a fixed, claim-bounded baseline amount and distributes the remaining estate proportionally over residual claims. By varying the baseline parameter, the family generates a continuum of allocation rules ranging from the proportional rule to the CEA rule. This structure provides a flexible and transparent way to balance proportionality and egalitarianism.

Beyond its theoretical appeal, this compromise structure has direct relevance in contemporary policy contexts. A salient example arises in the allocation of the limited global carbon budget among countries. Climate mitigation can naturally be framed as a “carbon budget bankruptcy” problem \citep{gimenez2016global}: the finite carbon budget represents the estate, while countries’ projected emissions represent competing claims. Under this interpretation, allocation rules from the P--CEA family offer a transparent mechanism for distributing the remaining carbon budget. For instance, countries’ claims may reflect their Intended Nationally Determined Contributions (INDCs) under the Paris Agreement, informed by emissions scenarios such as those developed by the Intergovernmental Panel on Climate Change \citep{ipcc2014ipcc}. Given a global carbon budget consistent with limiting warming to below 2$^\circ$C \citep{meinshausen2009greenhouse}, a rule from our family guarantees each country a baseline allocation while distributing the remainder proportionally to its residual claim. Such a mechanism explicitly combines a universal minimum with recognition of differential demands.

The same logic appears in a variety of allocation environments. In the English Premier League, broadcasting revenues are divided between an equal-share component and performance-based merit payments. The equal share guarantees each club a baseline amount, while merit payments reward relative success. Similarly, quota systems in fisheries or dairy production often ensure producers a survival allocation before distributing additional quantities according to capacity or historical production. Food rationing systems and the allocation of essential goods—such as water or medical supplies—frequently follow a two-stage logic: secure a minimum level for all recipients, then allocate remaining resources in proportion to need or population. These arrangements reflect an intuitive principle: fairness may require guaranteeing a floor before accounting for differences in claims.

A related philosophical parallel arises in the idea of Universal Basic Income (UBI). UBI embodies the principle that all individuals are entitled to a foundational level of economic security independent of status or productivity. This mirrors the core structure of the P--CEA family: a fixed baseline allocation supplemented by a component responsive to claims or entitlements. Real-world examples, such as the Alaska Permanent Fund and large-scale UBI pilot programs, demonstrate the practical relevance of embedding a guaranteed minimum within broader distributive systems.

Together, these examples illustrate the broad normative and practical appeal of allocation rules that integrate a universal guarantee with proportional differentiation. The P--CEA family formalizes this compromise within the theory of claims problems and provides a rigorous foundation for its analysis.

\subsection{Overview of Our Results}

We introduce a parametric family of allocation rules that assigns each agent a fixed, exogenously specified minimum award—subject to claim-boundedness—and distributes the remaining estate proportionally over residual claims. By varying the minimum parameter, the family generates a continuum of rules ranging from the proportional rule to the constrained equal awards (CEA) rule. Each member of the family thus represents a principled compromise between proportionality and egalitarianism.

Our first main contribution is an axiomatic characterization of this family. The characterization is based on two fairness principles. The first, \emph{No Advantageous Reallocation Beyond the Minimum} ($NAR_L$), requires that agents whose claims exceed the prescribed baseline cannot improve their collective allocation by redistributing claims among themselves while keeping their aggregate claim fixed. This condition limits strategic manipulation and protects agents with smaller claims from being adversely affected by coordinated misreporting among higher-claim agents. The second axiom, \emph{Sustainable Lower Bound on Awards} ($SLBA_L$), ensures that each agent receives at least the minimum of her claim and the specified threshold $L$, thereby embedding a guaranteed floor into the allocation process. Together, these axioms uniquely characterize the P--CEA family.

Our second contribution shows that $NAR_L$ admits an equivalent formulation in terms of \emph{Non-subsistence Decentralizability}. This alternative axiom requires that, once the subsistence (below-threshold) claims are fixed, each agent’s award depends only on her own claim, the aggregate claims, and the available estate. In other words, an agent need not know the entire profile of claims to determine her allocation. We show that the combination of decentralizability and the sustainable lower bound axiom also characterizes the P--CEA family, providing an alternative and structurally appealing foundation for the rules.

Finally, we develop the dual analysis of the model. We introduce dual counterparts of the lower-bound axiom and establish parallel characterizations for the associated loss-based rules. These dual rules mirror the structure of the original family but operate by equalizing losses rather than awards, revealing a symmetric structure underlying the compromise between proportionality and egalitarianism.

\subsection{Relevant Literature}

The study of fairness in the allocation of scarce resources has deep intellectual roots, spanning philosophical, religious, and mathematical traditions. One of the earliest formal treatments of conflicting claims problems is due to \citet{o1982problem}, who drew on Talmudic arbitration principles to analyze rights-based divisions. This approach was subsequently developed by \citet{aumann1985game}, who introduced a cooperative game-theoretic framework for bankruptcy problems grounded in classical legal doctrines. \citet{curiel1987bankruptcy} further formalized the structure of bankruptcy games and examined systematic methods for allocating resources among claimants.

A substantial strand of the literature investigates compromise solutions between the proportional rule and the constrained equal awards (CEA) rule. \citet{thomson2007existence} studies award paths that combine a segment along the 45-degree line with a segment leading to the claims vector, and explores consistent extensions of such constructions to multi-agent settings. In that framework, however, the equal minimum assigned to each agent does not exceed the smallest claim, whereas in our model the guaranteed baseline is exogenously specified and may differ from the lowest claim.

Focusing initially on the two-agent case, \citet{thomson2015a} propose a graphical compromise between the proportional and CEA rules and characterize it using claims continuity and composition down. Their construction averages allocations along an axis parallel to the larger claim and then extends the rule to multiple agents via consistency. Our approach differs in motivation and structure: rather than averaging benchmark allocations, we impose a guaranteed minimum award for each agent and allocate the remainder proportionally. Geometrically, each rule in our family consists of an initial segment along the 45-degree line—whose length depends on the chosen parameter—followed by a segment with the slope of the proportional rule.

In a related contribution, \citet{thomson2015b} consider compromise rules defined as weighted averages of proportional and CEA allocations, with weights depending on the claims profile. In the multi-agent case, however, only the two benchmark rules themselves satisfy consistency within that family. By contrast, every member of the P--CEA family satisfies consistency, preserving a key structural property of classical bankruptcy rules.

The rule most closely related to ours is the $\alpha_{\min}$ rule introduced by \citet{gimenez2014proportional}. This rule assigns the largest feasible equal award—equal to the smallest claim—and distributes the remaining estate proportionally. It can thus be interpreted as a convex combination of equal awards and proportionality. The P--CEA family shares this two-stage structure but differs in two important respects. First, the baseline level in our framework is exogenously chosen rather than determined by the smallest claim. Second, unlike the $\alpha_{\min}$ rule,  the P--CEA family satisfies consistency. Third, $\alpha_{\min}$ rule coincides with CEA in the two-agent case while $P-CEA$ family of rules preserves a genuine compromise structure even for two agents.

Our axiomatic characterizations build on variants of foundational principles in the bankruptcy literature, including no advantageous reallocation \citep{moulin1985egalitarianism}, feasible lower bounds \citep{moulin1992welfare}, and decentralizability \citep{moulin1991axioms}. For comprehensive surveys of claims problems and fair division, see \citet{moulin2002axiomatic} and \citet{thomson2019}.

More recently, the bankruptcy framework has been applied to global policy challenges, particularly in environmental contexts. \citet{gimenez2016global} interpret the global carbon budget as a conflicting claims problem, highlighting the normative role of allocation rules in climate negotiations. \citet{duro2020allocation} analyze the distribution of CO$_2$ emissions under alternative fairness criteria, and \citet{ju2021fair} develop cooperative approaches to emissions abatement. These contributions underscore the growing relevance of axiomatic fairness principles in global resource governance and motivate the study of flexible compromise rules such as the P--CEA family.

\subsection{Structure of the Paper}

The remainder of the paper is organized as follows. Section~2 presents the formal model and introduces the P--CEA family of compromise rules. We discuss their structural properties, examine the fairness principles they satisfy, and analyze their performance under standard inequality criteria. Section~3 develops the axiomatic framework and provides characterizations of the P--CEA family based on our proposed principles. Section~4 turns to the dual perspective, introducing the corresponding loss-based rules and establishing parallel characterizations using dual axioms. Section~5 concludes with a discussion of the main findings and outlines directions for future research.

\section{Model}

In this section, we introduce the standard framework and formally define the family of compromise rules. We then examine the fairness principles these rules satisfy and evaluate their performance under selected measures of inequality.

Consider a set of agents $N = \{1, 2, \dots, n\}$, where each agent $i \in N$ has a claim $c_i$ on an estate $e$. Let $\mathbf{c} = (c_1, c_2, \ldots, c_n)$ denote the vector of claims. Without loss of generality, we assume that the claims are ordered in ascending fashion, that is, $c_1 \leq c_2 \leq \cdots \leq c_n$.

If the available estate is sufficient to satisfy all claims, the allocation problem is straightforward. The problem becomes non-trivial when the aggregate claim $c_N = \sum_{i \in N} c_i$ exceeds the available estate, that is, when $c_N > e$. In such a situation, it is impossible to satisfy all claims in full; this setting is known as a \emph{claims problem}. A claims problem is therefore represented by the ordered pair $(\mathbf{c}, e)$.

Let $\mathcal{C}$ denote the set of all such claims problems. For each $(\mathbf{c}, e) \in \mathcal{C}$, a \emph{division rule} assigns an allocation of the estate among the agents. Formally, a division rule is a mapping $\varphi: \mathcal{C} \rightarrow \mathbb{R}^n_+$. Let $\mathbf{X}(\mathbf{c}, e)$ denote the set of admissible award vectors for the problem $(\mathbf{c}, e)$.

Any division rule must satisfy three fundamental conditions:
(i) \emph{Non-negativity}: $0 \leq \varphi_i(\mathbf{c}, e)$ for all $i \in N$;
(ii) \emph{Claim-boundedness}: $\varphi_i(\mathbf{c}, e) \leq c_i$ for all $i \in N$;
(iii) \emph{Budget-balancedness}: $\sum_{i \in N} \varphi_i(\mathbf{c}, e) = e$.

Within these feasibility constraints, different division rules embody different fairness principles, which we introduce below.

\subsection{Benchmark Rules}

The family of rules proposed in this paper provides a compromise between two classical benchmarks in the fair division literature. We briefly review these foundational rules.

In \emph{Nicomachean Ethics}, Book V, Aristotle formulates the principle that “equals should be treated equally, and unequals unequally in proportion to their relevant similarities and differences.” This idea underlies the \emph{Proportional Rule (P)}, according to which the estate is distributed in proportion to individual claims:
\[
P(\mathbf{c}, e) \equiv \left(\tfrac{e}{c_N}\right)\mathbf{c}.
\]

A contrasting principle is equality before the law, emphasized in the \emph{Mishneh Torah}. This perspective motivates the \emph{Equal Awards (EA) method}, which allocates the estate equally among all agents:
\[
EA(\mathbf{c}, e) \equiv \left(\tfrac{e}{n}\right)\mathbf{1}_n,
\]
where $\mathbf{1}_n = (1, \ldots, 1)$ denotes the $n$-dimensional vector of ones.

Although normatively appealing, the EA method disregards the magnitudes of individual claims and may therefore violate claim-boundedness. The \emph{Constrained Equal Awards (CEA) Rule} addresses this limitation by equalizing awards as much as possible without allowing any agent to receive more than her claim. Formally,
\[
CEA(\mathbf{c}, e) \equiv \big(\min\{c_i, \lambda\}\big)_{i \in N},
\]
where $\lambda \in \mathbb{R}_+$ is chosen such that $\sum_{i \in N} \min\{c_i, \lambda\} = e$. The CEA rule thus produces the most equal allocation compatible with claim-boundedness.

The dual of the CEA rule is the \emph{Constrained Equal Losses (CEL) Rule}, which equalizes losses rather than awards. We discuss duality in detail in Section~4. The CEL rule is defined as
\[
CEL(\mathbf{c}, e) \equiv \big(\max\{c_i - \mu, 0\}\big)_{i \in N},
\]
where $\mu \in \mathbb{R}_+$ is chosen to satisfy $\sum_{i \in N} \max\{c_i - \mu, 0\} = e$.

Under claim-boundedness, the CEA rule yields the most egalitarian feasible division and typically favors agents with smaller claims. In contrast, the proportional rule preserves relative claim magnitudes and may disadvantage agents with small claims when disparities are large. The CEL rule lies at the opposite extreme, favoring agents with larger claims.

Thus, the proportional and CEA rules represent two endpoints of a fairness spectrum: the former emphasizes proportional entitlement, whereas the latter prioritizes egalitarian protection. The family of compromise rules introduced below provides a systematic way to interpolate between these two principles, allowing for more balanced and context-sensitive allocations.

\subsection{A Family of Compromise Rules}

We introduce the \emph{P--CEA family of compromise rules}, a class of allocation rules that interpolates between the Proportional rule and the Constrained Equal Awards (CEA) rule. Each rule in this family is indexed by a parameter $L \in \mathbb{R}_+$.

The construction proceeds in two steps. First, a fixed amount $L$ is assigned to each agent—capped at her claim. Second, the remaining estate is distributed proportionally among the residual claims. When $L = 0$, the rule coincides with the proportional rule; when $L = \lambda$, it coincides with the CEA rule. Intermediate values $L \in (0, \lambda)$ generate compromise rules that lie strictly between these two benchmarks.

Since, for a given estate, the maximum uniform amount that can be assigned to all agents without violating feasibility is $\lambda$, the effective range of the parameter is $[0,\lambda]$. Thus, while $L$ is chosen exogenously, its admissible range is determined endogenously by the estate and the claims vector. Hence, throughout the rest of the paper, by $L$ we will mean $\min\{L,\lambda\}$.

Suppose $c_k \leq L \leq c_{k+1}$ for some $k \in \{0, \ldots, n\}$. Define the vector of fixed awards as
\[
\mathbf{c}^L = \big(\min\{c_i, L\}\big)_{i \in N},
\]
and let the residual estate after assigning the fixed component be
\[
e^L = e - \sum_{i \in N} c_i^L.
\]
Let us also define by $S_0=\{1,2,...,k\}$ the set of agents whose claims are below the value of $L$. Also, we call $N^1=N\setminus S_0$ the set of agents whose claims are above the value L.
We now define the P--CEA family of rules.

\begin{definition}[The P--CEA Family of Compromise Rules]
For each $(\mathbf{c}, e) \in \mathcal{C}$ with $c_i \geq 0$ for all $i \in N$,
\[
\psi^L(\mathbf{c}, e) = \mathbf{c}^L + P(\mathbf{c} - \mathbf{c}^L, e^L).
\]
\end{definition}

We refer to $\psi^L(\mathbf{c}, e)$ as the \emph{P--CEA compromise rule with parameter $L$}.

Note that the case $L < c_1$ is permitted. In this situation, the fixed award vector becomes $\mathbf{c}^L = (L)_{i \in N}$, and the residual estate reduces to $e^L = e - nL$.

\begin{example}

Consider a two-agent problem with total estate $e = 100$ and claims 
$c_1 = 50$ and $c_2 = 100$.  
Under the CEA rule, the allocation is $(50, 50)$; under the 
proportional rule, it is $\left(\tfrac{100}{3}, \tfrac{200}{3}\right)$.  
The award paths generated by these two rules form the boundaries of a triangle 
in the award space, as illustrated in Figure~\ref{fig:award-paths}.  
The award paths corresponding to the P--CEA rules with $L \in (0, \lambda)$ 
lie strictly inside this triangle.

For instance, let $L = 25$. The fixed award vector is 
$\mathbf{c}^L = (\min\{50, 25\}, \min\{100, 25\}) = (25, 25)$, 
and the residual estate equals 
$e^L = 100 - (25 + 25) = 50$.  
Applying the proportional rule to the residual claims yields 
$P((25, 75), 50) = (12.5, 37.5)$.  
Hence, the final allocation under the P--CEA rule with $L = 25$ is 
$\psi^{25}(\mathbf{c}, e) = (25, 25) + (12.5, 37.5) = (37.5, 62.5)$.  
The corresponding award path is shown in black in Figure~\ref{fig:award-paths}.

\begin{figure}[H]
    \centering
    \begin{tikzpicture}
    \begin{axis}[
        axis x line=middle,
        axis y line=middle,
        xlabel = {First agent},
        ylabel = {Second agent},
        xmin=0, xmax=110,
        ymin=0, ymax=110,
        xtick={0,25,50,100},
        ytick={0,25,50,100},
        ticklabel style={font=\small},
        xlabel style={at={(ticklabel* cs:0.5)}, anchor=north, yshift=-16pt},
        ylabel style={at={(ticklabel* cs:0.7)}, anchor=east, rotate=90, yshift=30pt},
        legend pos=outer north east,
        clip=false
    ]
    \addplot[color=blue, mark=o]
        coordinates {(0,0)(33.33,66.66)(50,100)};
    \addplot[color=red, mark=o]
        coordinates {(0,0)(50,50)(50,100)};
    \addplot[color=black, mark=o]
        coordinates {(0,0)(25,25)(37.5,62.5)(50,100)};
    \draw[dotted] (25,25) -- (25,0);
    \draw[dotted] (25,25) -- (0,25);
    \draw[dotted] (50,50) -- (50,0);
    \draw[dotted] (50,50) -- (0,50);
    \draw[dotted] (50,100) -- (0,100);
    \draw[brown] (100,0) -- (0,100);
    \node[right] at (50,100){$(c_1,c_2)$};
    \node[below left] at (axis cs:0,0) {$0$};
    \legend{Proportional rule, CEA rule, P--CEA rule with $L=25$}
    \end{axis}
    \end{tikzpicture}
    \caption{Award paths under the proportional rule, 
    the CEA rule, and the P--CEA rule with $L = 25$, for two agents.}
    \label{fig:award-paths}
\end{figure}

\end{example}
A division rule that solves a claims problem is generally expected to satisfy a number of fundamental fairness criteria. Below, we review several well-established principles and show that the proposed $\psi^L$ family satisfies each of them.

\noindent\textbf{Continuity.}
For each sequence $\{(\mathbf{c}^v,e^v)\}$ in $\mathcal{C}$ and each $(\mathbf{c},e)\in \mathcal{C}$, if $(\mathbf{c}^v,e^v)\rightarrow (\mathbf{c},e)$, then $\varphi(\mathbf{c}^v,e^v)\rightarrow \varphi(\mathbf{c},e)$.  
This property requires that small changes in the claims problem lead only to small changes in the allocation.  
The $\psi^L$ family satisfies continuity.

\noindent\textbf{Equal Treatment of Equals.}  
For each $N \in \mathbb{N}$, each $(\mathbf{c}, e) \in \mathcal{C}$, and each pair $\{i, j\} \subseteq N$, if $c_i = c_j$, then $\varphi_i(\mathbf{c}, e) = \varphi_j(\mathbf{c}, e)$.  
Thus, agents with identical claims must receive identical awards.  
Every rule in the $\psi^L$ family satisfies this property.

\noindent\textbf{Order Preservation.}  
For each $N \in \mathbb{N}$, each $(\mathbf{c}, e) \in \mathcal{C}$, and for any $i, j \in N$ such that $c_i \geq c_j$, we have
\[
\varphi_i(\mathbf{c}, e) \geq \varphi_j(\mathbf{c}, e)
\quad \text{and} \quad
c_i - \varphi_i(\mathbf{c}, e) \geq c_j - \varphi_j(\mathbf{c}, e).
\]
This condition ensures that larger claims result in weakly larger awards and weakly larger residual losses.  
All $\psi^L$ rules satisfy order preservation.

\noindent\textbf{Resource Monotonicity.}  
For each $N \in \mathbb{N}$, each $(\mathbf{c}, e) \in \mathcal{C}$, and for $e' > e$ with $c_N > e'$, we have
\[
\varphi_i(\mathbf{c}, e') \geq \varphi_i(\mathbf{c}, e),
\quad \text{for all } i \in N.
\]
Thus, an increase in the available estate cannot reduce any agent’s award.  
Every rule in the $\psi^L$ family satisfies resource monotonicity.

\noindent\textbf{Anonymity.}  
For each pair $\{N, N'\} \subset \mathbb{N}$ with $|N| = |N'|$, each $(\mathbf{c}, e) \in \mathcal{C}$, each bijection $\pi: N \to N'$, and each $i \in N$, we have
\[
\varphi_{\pi(i)}\big((c_{\pi(i)})_{i \in N}, e\big)
=
\varphi_i(\mathbf{c}, e).
\]
This property requires that the allocation depend only on claims and not on agent identities.  
The $\psi^L$ family satisfies anonymity.

\noindent\textbf{Consistency.}  
For each $N \in \mathbb{N}$, each $(\mathbf{c}, e) \in \mathcal{C}$, and each $N' \subset N$, if $x \equiv \varphi(\mathbf{c}, e)$, then
\[
\varphi\left(\mathbf{c}|_{N'}, \sum_{i \in N'} x_i\right)
=
\mathbf{x}|_{N'},
\]
where $\mathbf{c}|_{N'} := \{c_i\}_{i \in N'}$ and $\mathbf{x}|_{N'} := \{x_i\}_{i \in N'}$.  
This property requires that, when restricting attention to a subset of agents and their allocated total, reapplying the rule reproduces their original awards.  

Each $\psi^L$ rule consists of a fixed component determined by $L$ and a proportional component. The fixed component remains unchanged in subproblems, and the proportional rule is consistent by construction.  
Therefore, the $\psi^L$ family satisfies consistency.

\noindent\textbf{Converse Consistency.}  
For each $N \in \mathbb{N}$, each $(\mathbf{c}, e) \in \mathcal{C}$, and each $x \in \mathbf{X}(\mathbf{c}, e)$, if for every pair $N' \subset N$ with $|N'| = 2$ we have
\[
\mathbf{x}|_{N'}
=
\varphi\left(\mathbf{c}|_{N'}, \sum_{i \in N'} x_i\right),
\]
then $x = \varphi(\mathbf{c}, e)$.  
This condition requires that if an allocation agrees with the rule on all two-agent subproblems, then it must coincide with the rule on the full problem.  

Since the proportional component of $\psi^L$ satisfies converse consistency and the fixed component is independent of the estate level, the entire $\psi^L$ family satisfies converse consistency.

\noindent\textbf{Composition Down.}  
For each $(\mathbf{c}, e) \in \mathcal{C}$ and for every $e' < e$, we require
\[
\varphi(\mathbf{c}, e') = \varphi\big(\varphi(\mathbf{c}, e), e'\big).
\]
This property states that if the total estate is reduced after an initial allocation, then applying the rule to the reduced estate and the previously allocated vector yields the same outcome as applying the rule directly to the original claims with the smaller estate.

We show that the $\psi^L$ family satisfies composition down. As the estate decreases from $e$ to $e'$, the effective range of $L$ may also shrink. Let the admissible range of $L$ be $[0,\lambda]$ when the estate is $e$, and $[0,\lambda']$ when the estate is $e'$. Observe that $\lambda' \le \lambda$. Depending on the relative position of $L$ with respect to $\lambda$ and $\lambda'$, we distinguish two cases.

\emph{Case 1:} $L \le \lambda' < \lambda$.

In this case,
$\psi^L(\mathbf{c}, e) = \mathbf{c}^L + P(\mathbf{c}-\mathbf{c}^L, e^L)$
and
$\psi^L(\mathbf{c}, e') = \mathbf{c}^L + P(\mathbf{c}-\mathbf{c}^L, e^{\prime L})$.
Since the fixed component $\mathbf{c}^L$ remains unchanged and the proportional rule satisfies composition down, the result follows immediately.

\emph{Case 2:} $\lambda' \le L \le \lambda$, with at least one strict inequality.

In this case,
$\psi^L(\mathbf{c}, e) = \mathbf{c}^L + P(\mathbf{c}-\mathbf{c}^L, e^L)$,
whereas
$\psi^L(\mathbf{c}, e') = \mathbf{c}^{\lambda'} + P(\mathbf{c}-\mathbf{c}^{\lambda'}, e^{\lambda'})$,
with $\mathbf{c}^{\lambda'} = (\min\{c_i,\lambda'\})_{i \in N}$ and
$e^{\lambda'} = e' - \sum_{i \in N} c^{\lambda'}$.

Since $\min\{c_i,\lambda'\} \le \min\{c_i,L\}$ for all $i$, we have
$\sum_{i \in N} c^{\lambda'} \le \sum_{i \in N} c^L$. Thus, the reduction in the fixed component
$\sum_{i \in N} c^L_i - \sum_{i \in N} c^{\lambda'}_i$ is transferred to the proportional component.
Accordingly,
\[
\begin{aligned}
\psi\!\left(\mathbf{c}^L + P(\mathbf{c}-\mathbf{c}^L, e^L),\, e' \right)
&= \mathbf{c}^{\lambda'} 
   + P\!\left(\mathbf{c}-\mathbf{c}^{\lambda'},\, 
     e' - \sum_{i \in N} \mathbf{c}^L + (\sum_{i \in N} c^L_i - \sum_{i \in N} c^{\lambda'}_i) \right) \\
&= \mathbf{c}^{\lambda'}
   + P\!\left(\mathbf{c}-\mathbf{c}^{\lambda'},\, e^{\lambda'} \right).
\end{aligned}
\]
Hence, every rule in the P--CEA family satisfies composition down.

Composition down implies endowment monotonicity. As shown in Theorem~6.1 of \cite{thomson2019}, any family of rules satisfying this property induces a structure of \emph{award paths} that can be represented as monotone, space-filling trees in the award space. Each path originates at the origin and expands outward as the estate increases. Once two paths intersect, they coincide thereafter. These paths therefore form a tree-like structure (see Figure~\ref{fig:monotone-tree}) that evolves smoothly as the parameter $L$ varies.

\begin{figure}[H]
    \centering

    % --- (a) Fixed L ---
    \begin{minipage}[b]{0.7\textwidth}
        \centering
        \begin{tikzpicture}
            \begin{axis}[
                width=8cm, height=7cm,
                axis x line=middle,
                axis y line=middle,
                xlabel={First agent},
                ylabel={Second agent},
                xmin=0, xmax=50,
                ymin=0, ymax=100,
                ticklabel style={font=\small},
                xlabel style={at={(ticklabel* cs:0.6)}, anchor=north, yshift=-15pt},
                ylabel style={at={(ticklabel* cs:0.7)}, anchor=east, rotate=90, yshift=25pt},
                clip=false
            ]
            \addplot+[mark=o, color=lightcoral, thick] coordinates {(0,0) (10,10) (10,50)};
            \addplot+[mark=o, color=indianred, thick] coordinates {(0,0) (30,30) (32,70)};
            \addplot+[mark=o, color=firebrick, thick] coordinates {(0,0) (30,30) (40,60)};
            \end{axis}
        \end{tikzpicture}
        
        \vspace{2mm}
        
        \textbf{(a)} Award paths for varying claims, fixed $L=30$.
    \end{minipage}

    \vspace{1cm}

    % --- (b) Changing L ---
    \begin{minipage}[b]{0.7\textwidth}
        \centering
        \begin{tikzpicture}
            \begin{axis}[
                width=8cm, height=7cm,
                axis x line=middle,
                axis y line=middle,
                xlabel={First agent},
                ylabel={Second agent},
                xmin=0, xmax=50,
                ymin=0, ymax=100,
                xtick={0,10,20,30,40,50},
                ytick={0,20,40,60,80,100},
                ticklabel style={font=\small},
                xlabel style={at={(ticklabel* cs:0.6)}, anchor=north, yshift=-15pt},
                ylabel style={at={(ticklabel* cs:0.7)}, anchor=east, rotate=90, yshift=25pt},
                clip=false
            ]
            \addplot+[mark=o, color=lightcoral, thick] coordinates {(0,0) (10,10) (10,50)};
            \addplot+[mark=o, color=indianred, thick] coordinates {(0,0) (30,30) (32,70)};
            \addplot+[mark=o, color=firebrick, thick] coordinates {(0,0) (30,30) (40,60)};
            \addplot+[mark=o, dashed, color=lightblue, thick] coordinates {(0,0) (5,5) (10,50)};
            \addplot+[mark=o, dashed, color=cornflowerblue, thick] coordinates {(0,0) (5,5) (32,70)};
            \addplot+[mark=o, dashed, color=royalblue, thick] coordinates {(0,0) (5,5) (40,60)};
            \end{axis}
        \end{tikzpicture}
        
        \vspace{2mm}
        
        \textbf{(b)} Tree structure evolution as $L$ decreases from 30 to 5.
    \end{minipage}

    \caption{Monotone tree structures induced by the $\psi^L$ family in award space.}
    \label{fig:monotone-tree}
\end{figure}

When the rule additionally satisfies \emph{homogeneity}, the award space can be partitioned into cones, each generated by a curve together with its homothetic transformations. In the two-agent case, every generating curve—except possibly for an initial segment—is visible from the origin. In our family, visibility is preserved because all non-degenerate segments pass through the origin, thereby maintaining the tree structure that is compatible with homogeneity.

\noindent\textbf{Composition Up.}  
For each $(\mathbf{c}, e) \in \mathcal{C}$ and each $e' > e$ such that $c_N > e'$, we require
\[
\varphi(\mathbf{c}, e') = \varphi(\mathbf{c}, e) + \varphi\big(\mathbf{c} - \varphi(\mathbf{c}, e),\, e' - e\big).
\]
This property states that when the estate increases, the additional amount can be allocated by applying the rule to the residual claims and adding the resulting allocation to the original awards. Since the proportional rule satisfies composition up and the fixed component of $\psi^L$ does not depend on the estate level once $L$ is fixed, the $\psi^L$ family also satisfies this property.

\noindent\textbf{Young Rule.}  
Let $\Phi$ denote the family of functions $f: \mathbb{R}_+ \times [\underline{\beta}, \bar{\beta}] \to \mathbb{R}_+$, where $-\infty \le \underline{\beta} \le \bar{\beta} \le \infty$, such that:  
(i) $f$ is continuous and non-decreasing in its second argument;  
(ii) for each $c_0 \in \mathbb{R}_+$, $f(c_0, \underline{\beta}) = 0$ and $f(c_0, \bar{\beta}) = c_0$.

For a claims problem $(\mathbf{c}, e) \in \mathcal{C}$, the \emph{Young rule} associated with $f \in \Phi$ is defined by
\[
Y^f(\mathbf{c}, e) = \big(f(c_i, \beta)\big)_{i \in N},
\]
where $\beta \in [\underline{\beta}, \bar{\beta}]$ is chosen so that $\sum_{i \in N} f(c_i, \beta) = e$.

The proposed $\psi^L$ family constitutes a subclass of the Young rules. For a fixed parameter $L$, the associated function can be written as
\begin{equation*}
    f(c_i, \beta; L) = 
    \begin{cases}
        \min\{c_i, \beta\} &\text{ when } \beta \leq L \\
        \min\{c_i, L\} + (c_i - \min\{c_i, L\})(\beta-L) &\text{ when } L<\beta\leq L+1
    \end{cases}
\end{equation*}
Here, $\underline{\beta}=0$ and $\overline{\beta}=1+L$. This function satisfies:
\begin{itemize}
\item $f(c_i,\underline{\beta} ) = 0$, the minimum award assigned to an agent;
\item $f(c_i, \overline{\beta}) = c_i$, the maximum feasible award;
\item $f$ is continuous and non-decreasing in $\beta$.
\end{itemize}

Alternatively, the conclusion follows from the fact that for each fixed $L$, the rule $\psi^L$ is resource-monotonic, consistent, and continuous; hence, it is a Young rule.

The award distribution of the \emph{$\alpha_{\min}$-egalitarian rule} introduced by \citet{gimenez2014proportional} coincides in our framework with that of the rule $\psi^{c_1}(\mathbf{c}, e)$, where $c_1$ denotes the smallest claim. The $\alpha_{\min}$ rule does not satisfy the property of \emph{consistency}. The reason is that its fixed component is determined endogenously—by the smallest claim in the problem—which may change when considering subproblems. In contrast, the rule $\psi^{c_1}$ treats the fixed component as exogenously specified, ensuring consistency across subsets. Nevertheless, if the agent with the smallest claim remains in the subset under consideration, the $\alpha_{\min}$ rule behaves consistently. To formalize this, we introduce a stronger notion of consistency.

\begin{definition}[Consistency$_1$]
Let $N \in \mathbb{N} \setminus \{1\}$, $(\mathbf{c}, e) \in \mathcal{C}$, and let $N^2 = N' \cup \{1\}$ for some $N' \subset N$. If $x \equiv \varphi(\mathbf{c}, e)$, then
\[
\varphi\!\left(\mathbf{c}|_{N^2}, \sum_{i \in N^2} x_i \right) = x|_{N^2}.
\]
\end{definition}

The $\alpha_{\min}$ rule satisfies Consistency$_1$ whenever the agent with the smallest claim remains in the subset under consideration. For the same reason, it also does not satisfy \emph{converse consistency}. Table~\ref{tab:yourlabel} summarizes the fairness properties satisfied by the most relevant division rules considered in this paper.

\begin{table}[ht]
\centering
\caption{Fairness properties satisfied by selected division rules}
\vspace{0.3cm}
\label{tab:yourlabel}
\begin{tabular}{lcccc}
\toprule
\textbf{Property} & \textbf{Proportional} & \textbf{CEA} & $\boldsymbol{\alpha_{\min}}$ & \textbf{P–CEA Family} \\
\midrule
Equal Treatment of Equals & yes & yes & yes & yes \\
Order Preservation         & yes & yes & yes & yes \\
Resource Monotonicity      & yes & yes & yes & yes \\
Anonymity                  & yes & yes & yes & yes \\
Consistency                & yes & yes & no  & yes \\
Converse Consistency       & yes & yes & no  & yes \\
Composition Down           & yes & yes & yes & yes \\
Composition Up             & yes & yes & yes & yes \\
Young Rule Representation  & yes & yes & no  & yes \\
\bottomrule
\end{tabular}
\end{table}

\begin{example}
To illustrate the behavior of the different allocation rules, consider the claims problem with $\mathbf{c} = (10, 50, 70)$ and estate $e = 100$. Table~\ref{tab:yourlabel1} reports the allocations generated by several benchmark and compromise rules. In this case, the largest admissible value of the parameter is $\lambda = 45$. The proportional rule corresponds to $L = 0$, while the CEA rule corresponds to $L = 45$.

\begin{table}[ht]
\centering
\caption{Allocations for claims $(10, 50, 70)$ and estate $e = 100$}
\vspace{0.3cm}
\label{tab:yourlabel1}
\begin{tabular}{lccc}
\toprule
\textbf{Division Rule} & \textbf{Agent 1} & \textbf{Agent 2} & \textbf{Agent 3} \\
\midrule
Proportional           & 7.692            & 38.462           & 53.846           \\
CEA                    & 10               & 45               & 45               \\
CEL                    & 0                & 40               & 60               \\
$\alpha_{\min}$        & 10               & 38               & 52               \\
$\psi^{L=5}$           & 8.696            & 38.261           & 53.043           \\
$\psi^{L=20}$          & 10               & 38.75            & 51.25            \\
$\psi^{L=40}$          & 10               & 42.5             & 47.5             \\
\bottomrule
\end{tabular}
\end{table}

The example reveals substantial variation across allocation methods. The CEL rule produces the most unequal outcome, assigning nothing to Agent~1 and the largest share to Agent~3. By contrast, the CEA rule yields the most egalitarian allocation, equalizing the awards of Agents~2 and~3 while fully satisfying Agent~1’s claim. 

Within the P--CEA family, allocations become progressively more egalitarian as $L$ increases. A higher value of $L$ raises the guaranteed baseline component and reduces the weight placed on proportionality, thereby shifting the distribution toward equal awards.
\end{example}

\subsection{Inequality Comparisons}

We now assess the distributive implications of the P--CEA family using standard inequality orderings. These criteria allow us to compare the rules not only in terms of fairness axioms, but also in terms of the degree of equality they generate across agents.

\subsubsection{Leximin Ordering}

Let \( x = (x_1, \ldots, x_n) \) and \( y = (y_1, \ldots, y_n) \) be two vectors in \( \mathbb{R}^n \), sorted in non-decreasing order. We say that \( x \) is preferred to \( y \) under the \emph{leximin ordering}, denoted \( x \succ_{\text{lex}} y \), if there exists an index \( m \in \{0, 1, \ldots, n-1\} \) such that
\[
x_i = y_i \text{ for all } i = 1, \ldots, m, 
\quad \text{and} \quad 
x_{m+1} > y_{m+1}.
\]

The leximin criterion gives absolute priority to the worst-off agent, breaking ties by comparing the second worst-off, and so on. Among all division rules, the CEA rule maximizes the leximin ordering over awards, while the CEL rule maximizes the leximin ordering over losses.

The P--CEA family $\psi^L$ lies between the two benchmark rules in the leximin sense:
\[
CEA(\mathbf{c},e) \succ_{\text{lex}} \psi^L(\mathbf{c},e) \succ_{\text{lex}} P(\mathbf{c},e).
\]
Moreover, for any \( L, L' \in [0, \lambda] \) with \( L' > L \),
\[
\psi^{L'}(\mathbf{c},e) \succ_{\text{lex}} \psi^L(\mathbf{c},e).
\]
Thus, increasing $L$ monotonically improves the allocation in the leximin ordering.

\subsubsection{Lorenz Domination}

Let \( x = (x_1, \ldots, x_n) \) and \( y = (y_1, \ldots, y_n) \) be vectors in \( \mathbb{R}^n \), sorted in non-decreasing order. We say that \( x \) \emph{Lorenz-dominates} \( y \), written \( x \succcurlyeq_{\text{LD}} y \), if
\begin{itemize}
    \item \( \sum_{i=1}^k x_i \geq \sum_{i=1}^k y_i \quad \text{for all } k = 1, \ldots, n-1, \)
    \item and \( \sum_{i=1}^n x_i = \sum_{i=1}^n y_i \).
\end{itemize}

Lorenz domination captures the idea that \( x \) is more equally distributed than \( y \). As shown by \citet{bosmans2011lorenz}, for any division rule \( \varphi \) and any claims problem \( (\mathbf{c},e) \),
\[
CEA(\mathbf{c},e) \succcurlyeq_{\text{LD}} \varphi(\mathbf{c},e) \succcurlyeq_{\text{LD}} CEL(\mathbf{c},e).
\]

For the P--CEA family, we have
\[
CEA(\mathbf{c},e) \succcurlyeq_{\text{LD}} \psi^L(\mathbf{c},e) \succcurlyeq_{\text{LD}} P(\mathbf{c},e).
\]
More precisely, for any \( L \in [0,\lambda] \),
\begin{equation*}
\left\{
\begin{aligned}
    &CEA(\mathbf{c},e) \succcurlyeq_{\text{LD}} \alpha_{\min}(\mathbf{c},e) \succcurlyeq_{\text{LD}} \psi^L(\mathbf{c},e) \succcurlyeq_{\text{LD}} P(\mathbf{c},e), && \text{if } L \leq c_1; \\
    &CEA(\mathbf{c},e) \succcurlyeq_{\text{LD}} \psi^L(\mathbf{c},e) \succcurlyeq_{\text{LD}} \alpha_{\min}(\mathbf{c},e) \succcurlyeq_{\text{LD}} P(\mathbf{c},e), && \text{otherwise}.
\end{aligned}
\right.
\end{equation*}

These comparisons show that the rule $\psi^L$ constitutes a genuine compromise between the CEA and proportional benchmarks. As the parameter $L$ increases, the guaranteed component expands and the allocation shifts toward greater equality in the Lorenz sense, while still preserving proportional sensitivity to claims. The next result formalizes this monotonic relationship between the parameter $L$ and Lorenz dominance.

\begin{proposition}
For any \( L, L' \in [0, \lambda] \) with \( L' > L \), we have
\[
\psi^{L'}(\mathbf{c}, e) \succcurlyeq_{LD} \psi^L(\mathbf{c}, e).
\]
\end{proposition}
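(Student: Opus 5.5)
The plan is to vary the parameter continuously from $L$ to $L'$ and show that, along the way, every partial sum of the smallest awards weakly increases. By order preservation (already established for every $\psi^L$), for each value of the parameter the award vector $\psi^L(\mathbf{c},e)$ is ordered the same way as the claims $c_1\le\cdots\le c_n$. Hence the sorted partial sums appearing in the definition of $\succcurlyeq_{LD}$ are simply
\[
S_k(L)=\sum_{i=1}^k \psi^L_i(\mathbf{c},e), \qquad k=1,\dots,n,
\]
with the indices fixed once and for all. Budget balance gives $S_n(L)=e$ for every $L$. It therefore suffices to show that each $S_k(\cdot)$ is nondecreasing on $[0,\lambda]$.

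\textbf{Step 1: an explicit parametric form.} Fix an open interval of $L$ between two consecutive claim values, so that $S_0$ and $N^1$ are constant with $|N^1|=m$. On such an interval, agents in $S_0$ receive $c_i$. Each $i\in N^1$ receives $L+(c_i-L)\,t(L)$, where
\[
t(L)=\frac{e^L}{R(L)}, \qquad R(L)=\sum_{j\in N^1}(c_j-L),
\]
and $t(L)\in[0,1]$ by feasibility. Differentiating the budget identity $\sum_{j\in N^1}\bigl[L+(c_j-L)t\bigr]=e-\sum_{j\in S_0}c_j$ gives
\[
m(1-t)+R\,t'=0, \qquad\text{so}\qquad t'=-\frac{m(1-t)}{R}\le 0 .
\]

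\textbf{Step 2: sign pattern of the derivative vector.} The derivative of the awards with respect to $L$ is
\[
d_i=\begin{cases} 0, & i\in S_0,\\ 1-t+(c_i-L)\,t', & i\in N^1. \end{cases}
\]
Since $t'\le 0$, the entries $d_i$ on $N^1$ form a nonincreasing sequence in $i$, and $\sum_i d_i=0$. A vector that is zero on an initial block, nonincreasing afterwards, and sums to zero has nonnegative prefix sums. Indeed, the prefix sums stay at $0$ on the initial block, then rise while the $d_i$ are positive and fall while they are negative. They end at $0$, so they never become negative. Hence $S_k'(L)\ge 0$ for every $k$ on each such open interval.

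\textbf{Step 3: gluing across breakpoints.} The maps $L\mapsto\psi^L(\mathbf{c},e)$ are continuous, including at the breakpoints $L=c_i$: there the fixed component $\min\{c_i,L\}$ and the residual $c_i-\min\{c_i,L\}$ vary continuously, and $t$ stays well defined as long as $L<\lambda$. Consequently each $S_k$ is continuous and piecewise $C^1$ with nonnegative derivative on finitely many pieces, so it is nondecreasing on $[0,\lambda]$. This yields $S_k(L')\ge S_k(L)$ for all $k$, together with equal totals, which is exactly $\psi^{L'}(\mathbf{c},e)\succcurlyeq_{LD}\psi^L(\mathbf{c},e)$.

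\textbf{Main obstacle.} The delicate point is Step 2. The derivative vector is not monotone across the boundary between $S_0$ and $N^1$: it jumps from $0$ up to $1-t\ge 0$. The argument must therefore use the "zero block, then nonincreasing, total zero" structure rather than plain single-crossing. A secondary care point is the endpoint $L=\lambda$, where $R$ may vanish. I would handle it by continuity, noting that $\psi^\lambda$ equals the CEA allocation as the limit of $\psi^L$ as $L\uparrow\lambda$.
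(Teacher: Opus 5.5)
Your proof is correct, and it takes a genuinely different route from the paper's.

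\textbf{The paper's route and where it falls short.} The paper compares the two fixed parameters $L<L'$ directly. It splits into three cases according to where $L'$ falls relative to the claims, and shows that at the first index where $\psi^{L'}$ and $\psi^L$ differ, the award under $\psi^{L'}$ is larger. It then asserts that $\psi^{L'}_i-\psi^L_i\ge 0$ for every $i\ge k+1$, so that no later term can reverse the cumulative sums. That last assertion cannot hold, because both vectors sum to $e$. In the paper's own example, agent 3's award drops from $51.25$ under $\psi^{20}$ to $47.5$ under $\psi^{40}$.

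\textbf{What your argument supplies.} The missing ingredient is exactly the structure you isolate. The change in awards is zero on the pinned block, nonincreasing in the claim on $N^1$, and sums to zero, which forces nonnegative prefix sums. Your $d_i$ in fact simplifies to $(c_N-e)\bigl(\sum_{j\in N^1}c_j-m c_i\bigr)/R^2$, which changes sign at the mean claim in $N^1$. Differentiating in $L$ and gluing gives this sign pattern cleanly on each interval where $S_0$ is fixed, and it avoids the case analysis on where $L'$ lands. The cost is the piecewise-$C^1$ and continuity bookkeeping. The paper's discrete comparison can be repaired along your lines: $\psi^{L'}-\psi^L$ is nonnegative on $S_0(L')$, and on $N^1(L')$ it is affine in $c_i$ with slope $t(L')-t(L)\le 0$, again with zero total.

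\textbf{A minor point.} Your concern that $R$ may vanish at $L=\lambda$ is unfounded. Since $R(\lambda)=c_N-\sum_i\min\{c_i,\lambda\}=c_N-e>0$, $t$ is defined and continuous on all of $[0,\lambda]$, with $t(\lambda)=0$ and $\psi^\lambda=CEA(\mathbf{c},e)$. No limiting argument is needed at the endpoint.
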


\begin{proof}
Let the claims vector be ordered as
\[
c_1 \leq c_2 \leq \dots \leq c_k \leq L \leq c_{k+1} \leq \dots \leq c_n, \text{ or }
\]
\[
L\leq c_1\leq ... \leq c_n.
\]
We show that for $L^\prime > L$ the award vector under \( \psi^{L'} \) Lorenz-dominates that under \( \psi^L \).

Since the total estate \( e \) is fixed, it suffices to compare cumulative sums of awards. We proceed sequentially from the smallest claim upward. To begin with, we show, at the first index where the two allocations differ, the award under \( \psi^{L'} \) is strictly larger.

\textit{Case 1:} \( c_k \leq L \leq c_{k+1} < L' \).

For all agents \( i \leq k \), both rules award the full claim \( c_i \), so the allocations coincide. For agent \( k+1 \), under \( \psi^L \),
\[
\psi^L_{k+1}
=
L + P(\mathbf{c} - \mathbf{c}^L, e - e^L)_{k+1},
\]
whereas under \( \psi^{L'} \), since \( L' > c_{k+1} \), the fixed component equals the full claim:
\[
\psi^{L'}_{k+1} = c_{k+1}.
\]
Hence \( \psi^{L'}_{k+1} > \psi^L_{k+1} \), and the cumulative sum up to agent \( k+1 \) is strictly larger under \( \psi^{L'} \).

\textit{Case 2:} \( c_k \leq L < L' \leq c_{k+1} \).

Again, for agents \( i \leq k \), awards are identical under both rules. For agent \( k+1 \),
\[
\psi^L_{k+1}
=
L + P(\mathbf{c} - \mathbf{c}^L, e - e^L)_{k+1},
\quad
\psi^{L'}_{k+1}
=
L' + P(\mathbf{c} - \mathbf{c}^{L'}, e - e^{L'})_{k+1}.
\]

Let \( L' = L + \varepsilon \) with \( \varepsilon > 0 \). Then
\[
\mathbf{c}^{L'}
=
(\min\{c_i, L + \varepsilon\})_{i \in N},
\quad
e^{L'}
=
e - \sum_{j \leq k} c_j - (n-k)(L + \varepsilon).
\]

Consider the difference:
\begin{align*}
\psi^{L'}_{k+1} - \psi^L_{k+1}
&=
\left[
L + \varepsilon
+
\frac{c_{k+1} - L - \varepsilon}
{c_N - \sum_i c_i^L - (n-k)\varepsilon}
\big(e - e^L - (n-k)\varepsilon\big)
\right] \\
&\quad
-
\left[
L
+
\frac{c_{k+1} - L}
{c_N - \sum_i c_i^L}
(e - e^L)
\right] \\
&=
\varepsilon
\left[
\frac{c_N - \sum_i c_i^L - (e - e^L)}
{c_N - \sum_i c_i^L - (n-k)\varepsilon}
\right] \\
&\quad
+
\frac{(c_{k+1} - L)(n-k)\varepsilon
\big[(e - e^L) - (c_N - \sum_i c_i^L)\big]}
{(c_N - \sum_i c_i^L - (n-k)\varepsilon)
(c_N - \sum_i c_i^L)} \\
&=
\varepsilon
\left[
\frac{c_N - \sum_i c_i^L - (e - e^L)}
{c_N - \sum_i c_i^L - (n-k)\varepsilon}
\right]
\cdot
\left[
\frac{\sum_{i>k}(c_i - c_k)}
{c_N - \sum_i c_i^L}
\right]
\geq 0.
\end{align*}

Thus, the award to agent \( k+1 \) strictly increases as \( L \) rises to \( L' \), which increases the cumulative sum at the first index of difference.

\textit{Case 3:} \( L \leq c_1 \leq L' \).

In this case,
\[
\psi^L_{1}
=
L + P(\mathbf{c} - \mathbf{c}^L, e - e^L)_{1},
\quad
\psi^{L'}_{1} = c_1.
\]
Since \( c_1 > \psi^L_{1} \), the first component strictly increases.

\bigskip 

These three cases exhaust all possible positions of $L^\prime$ relative to the claims and $L$. To show the monotonicity of cumulative differences, notice for an agent \( k+i \) where $i\in\{1,...,n-k\}$,
\[
\psi^L_{k+i}
=
L + P(\mathbf{c} - \mathbf{c}^L, e - e^L)_{k+i},
\quad
\psi^{L'}_{k+i}
=
L' + P(\mathbf{c} - \mathbf{c}^{L'}, e - e^{L'})_{k+i}.\]

Following Case 2, we obtain
\begin{align*}
    &\psi^{L'}_{k+i} - \psi^{L}_{k+i} \ge 0\\
    i.e., &\psi^{L'}_i - \psi^{L}_i \ge 0 \quad \forall i\ge k+1
\end{align*}
Hence, from the first index at which the award under $\psi^{L'}$ exceeds that under $\psi^{L}$, the cumulative difference between the two allocations remains non-negative. Adding further components cannot reverse the inequality of the cumulative sums. Therefore, the cumulative sums of $\psi^{L'}$ are weakly larger than those of $\psi^{L}$ at every index, implying that
\[
\psi^{L'}(\mathbf{c}, e) \succcurlyeq_{LD} \psi^{L}(\mathbf{c}, e).
\]
\end{proof}

\section{Axiomatic Characterizations}

We now introduce the main axioms used to characterize the proposed P--CEA family of rules. These axioms are adapted from standard principles in the literature and modified to accommodate the presence of an exogenously specified lower bound $L$. Assume $k$ to be the highest claimant with a claim equal to or below $L$ and we write $S_0 = \{1,2,\dots,k\}$ as the set of agents with claims below L.

\begin{axiom}[No Advantageous Reallocation Beyond Claims Below $L$ (NAR$_L$)]
Fix an L. For each $(\mathbf{c}, e) \in \mathcal{C}$ and each subset $S \subseteq N \setminus S_0$, suppose $\{c_i'\}_{i \in S} \in \mathbb{R}_+^S$ satisfies $c_i' \geq L$ for all $i \in S$ and $\sum_{i \in S} c_i' = \sum_{i \in S} c_i$. Then:
\[
\sum_{i \in S} \varphi_i\big(((c_i')_{i \in S}, c_{N \setminus S}), e\big)
=
\sum_{i \in S} \varphi_i(\mathbf{c}, e).
\]
\end{axiom}

The axiom $NAR_L$ requires that no coalition of agents whose claims exceed $L$ can benefit by redistributing their claims among themselves while keeping their aggregate claim unchanged and preserving the lower bound $L$ for each member. It is a restricted version of Moulin’s \emph{No Advantageous Reallocation} (NAR) axiom \citep{moulin1985egalitarianism}. The restriction reflects the role of $L$ as a subsistence or guaranteed minimum level: only agents whose claims exceed this threshold are permitted to reallocate claims within their group.

Intuitively, the axiom protects agents with relatively low claims. By preventing coalitions of higher-claim agents from manipulating their internal distribution of claims, it limits strategic exaggeration and ensures that the guaranteed minimum component is not undermined through coordinated misreporting.

\begin{remark}
An equivalent formulation of $NAR_L$ is \emph{Non-subsistence Decentralizability}. The proof of the same is discussed later in this section. A division rule is said to be non-subsistence decentralizable if, for each agent $i$, her award depends only on her own claim $c_i$, the total claim $c_N$, the available estate $e$, and the vector of subsistence claims $\mathbf{c}_{\text{sub}} = (c_1, \dots, c_k)$. Formally,
\[
\varphi_i(\mathbf{c}, e)
=
t_i(c_i; c_N; e; \mathbf{c}_{\text{sub}}).
\]
\end{remark}

This axiom expresses a natural decentralization requirement: once the subsistence claims are fixed, an agent’s allocation depends only on her own claim and aggregate information. In particular, it eliminates the need for each agent’s award to depend on the entire claim vector. Such informational independence is especially desirable in large populations, where complete knowledge of all claims may be costly or impractical.

A useful implication of decentralizability is that the joint award to any non-empty coalition $S \subset N$ can be expressed through a function $r_S$:
\[
\sum_{i \in S} \varphi_i(\mathbf{c}, e)
=
r_S\left(c_S; \sum_{j \in N \setminus S} c_j; e; \mathbf{c}_{\text{sub}}\right).
\]
Thus, a coalition’s total award depends only on its aggregate claim, the aggregate claim of its complement, the available estate, and the subsistence claims.

The second axiom formalizes the idea that the allocation rule must respect a guaranteed subsistence level. Since the parameter $L$ represents an exogenously specified minimum entitlement, the rule should ensure that no agent whose claim exceeds this threshold receives less than $L$. At the same time, agents with claims below $L$ should receive their claims in full, in accordance with claim-boundedness. The following axiom captures this requirement.

\begin{axiom}[Sustainable Lower Bound on Awards (SLBA$_L$)]
Fix an L. For each $(\mathbf{c}, e) \in \mathcal{C}$ and each $i \in N$, we require:
\[
\varphi_i(\mathbf{c}, e) \geq \min\{c_i, L\}.
\]
\end{axiom}

The axiom $SLBA_L$ guarantees that every agent receives at least the minimum of her claim and the exogenously specified threshold $L$. Thus, an agent whose claim exceeds $L$ is assured a baseline award of $L$, while an agent with a smaller claim receives her claim in full.

This requirement strengthens claim-boundedness by introducing a guaranteed floor, yet it remains weaker than the \emph{fair lower bound} condition of \citet{moulin1992welfare}, which uses $\min\{c_i, e/n\}$ as the benchmark. The two bounds coincide only when $L = e/n$; otherwise, $SLBA_L$ imposes a strictly less demanding constraint. Interpreting $L$ as a subsistence threshold—such as a minimum calorie requirement—the axiom states that an agent can receive less than $L$ only if her claim itself falls below $L$. In that case, the shortfall reflects that her essential needs are already met through external means, rather than through the allocation mechanism.

\begin{lemma}
\label{lem:pinned}
    For all $i\leq k$, $SLBA_L$ and claims-boundedness together imply $\varphi_i(\mathbf{c},e)=c_i$.
\end{lemma}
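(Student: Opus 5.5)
The argument is a direct squeeze between the two inequalities, so I would proceed in three short steps.

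First, fix any $(\mathbf{c},e)\in\mathcal{C}$ and any $i\le k$, so $i\in S_0$. By the definition of $S_0$ (agent $k$ is the highest claimant whose claim is at most $L$, and claims are ordered increasingly), we have $c_i\le c_k\le L$. Hence $\min\{c_i,L\}=c_i$. Here $L$ is read as $\min\{L,\lambda\}$, following the convention adopted when the family was introduced; the same identity holds with that effective value, since $S_0$ is defined relative to it.

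Second, apply $SLBA_L$ to agent $i$. This gives the lower bound
\[
\varphi_i(\mathbf{c},e)\ \ge\ \min\{c_i,L\}=c_i .
\]
Third, claim-boundedness, one of the three conditions every division rule must satisfy, gives the upper bound $\varphi_i(\mathbf{c},e)\le c_i$. Combining the two inequalities yields $\varphi_i(\mathbf{c},e)=c_i$, as claimed.

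The only point requiring care is bookkeeping rather than mathematics. One must confirm that membership in $S_0$ really means $c_i\le L$ for the operative value of $L$. If $L$ is replaced by $\lambda$, then $S_0$ must be computed with respect to $\lambda$ for the identity $\min\{c_i,L\}=c_i$ to hold. This is the same threshold used in stating $SLBA_L$, so no genuine obstacle arises. It is also worth noting that feasibility is not at risk: by construction of $\lambda$, $\sum_i \min\{c_i,L\}\le e$ whenever $L\le\lambda$, so the pinned awards are compatible with budget balance.
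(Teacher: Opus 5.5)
Your proof is correct and is the paper's argument: for $i \le k$ you have $c_i \le L$, so $SLBA_L$ gives $\varphi_i(\mathbf{c},e) \ge \min\{c_i,L\} = c_i$, claim-boundedness gives $\varphi_i(\mathbf{c},e) \le c_i$, and the two bounds force equality. Your extra bookkeeping about reading $L$ as $\min\{L,\lambda\}$ and about the feasibility of the pinned awards is accurate but not needed for the lemma itself.
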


\begin{proof}
Fix $(\mathbf{c}, e) \in \mathcal{C}$ and let $i \leq k$, so that $c_i \leq L$ by definition of $k$. Claim-boundedness requires that no agent receives more than her claim, yielding $\varphi_i(\mathbf{c}, e) \leq c_i$.

On the other hand, \textup{SLBA}$_L$ guarantees that each agent receives at least $\min\{c_i, L\}$. Since $c_i \leq L$, this minimum equals $c_i$, so that $\varphi_i(\mathbf{c}, e) \geq \min\{c_i, L\} = c_i$.

The two inequalities together yield $\varphi_i(\mathbf{c}, e) = c_i$, as claimed.
\end{proof}

The above lemma shows all the agents with claims equal to or below L are pinned to their claims by $SLBA_L$ and feasibility. Also for all $i\geq k+1$, we have $\varphi_i(\mathbf{c},e)\geq L$, meaning all agents with claims above $L$ are ensured the amount $L$.

We now provide the main characterization of the P--CEA family. 
The result shows that combining protection against strategic reallocation 
with a guaranteed lower bound uniquely determines the compromise structure.

\begin{theorem}
\label{th1}
A rule satisfies the axioms of \textit{$NAR_L$} and \textit{$SLBA_L$} if and only if it belongs to the P--CEA family of compromise rules.
\end{theorem}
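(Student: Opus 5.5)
The proof has two directions. \emph{Necessity} is a direct verification. Under $\psi^L$, agents with $c_i\le L$ receive $c_i$, and every $i\in N^1$ receives
\[
\psi^L_i(\mathbf{c},e)=L+\frac{c_i-L}{\sum_{j\in N^1}(c_j-L)}\,e^L .
\]
Since $e^L\ge 0$ (recall $L$ stands for $\min\{L,\lambda\}$), $\psi^L_i\ge \min\{c_i,L\}$, which gives $SLBA_L$. For a coalition $S\subseteq N^1$ with claims reallocated so that $c_i'\ge L$ and $\sum_S c_i'=\sum_S c_i$, three quantities are unchanged: the set $S_0$ (up to agents sitting exactly at $L$, who receive $L$ under either description), the total $\sum_{j\in N^1}(c_j-L)$, and $e^L$. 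Hence $\sum_{i\in S}\psi^L_i=|S|L+e^L\,\frac{\sum_S(c_i-L)}{\sum_{N^1}(c_j-L)}$ is invariant, which gives $NAR_L$.

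\emph{Sufficiency} proceeds in three steps.

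\textbf{Step 1.} By Lemma~\ref{lem:pinned}, $\varphi_i(\mathbf{c},e)=c_i$ for all $i\in S_0$. So only the agents in $N^1$ remain, and they share $e-\sum_{S_0}c_i$ with each receiving at least $L$. Write $\varphi_i=L+y_i$ with $0\le y_i\le c_i-L$ and $\sum_{N^1}y_i=e^L$.

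\textbf{Step 2.} I will use $NAR_L$ in its decentralized form from the Remark. Fix the subsistence claims, the aggregate $C=\sum_{N^1}c_j$ and $e$. Applying $NAR_L$ to two-agent coalitions $S=\{i,j\}\subseteq N^1$ lets us move claims between $i$ and $j$ without changing the other agents' claims. Applying it to a singleton complement $N^1\setminus\{i\}$, this shows $\varphi_i$ depends only on $c_i$ (given $C,e,\mathbf{c}_{\text{sub}}$), so $y_i=g(c_i-L)$ for a single function $g$. Equal treatment across agents follows by swapping claims inside $S=\{i,j\}$. Invariance of coalition sums then yields the Cauchy-type equation
\[
g(a)+g(b)=g(a+b)+g(0)
\]
on the admissible range of residual claims. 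Allowing a reallocated claim to equal exactly $L$ (such an agent is pinned to $L$ by Lemma~\ref{lem:pinned}) gives $g(0)=0$. Because $0\le g(x)\le x$, the function $g$ is bounded on bounded intervals, so the Cauchy equation forces linearity: $g(x)=\beta x$. Budget balance gives $\beta=e^L/\sum_{N^1}(c_j-L)$, and then $\varphi=\mathbf{c}^L+P(\mathbf{c}-\mathbf{c}^L,e^L)=\psi^L(\mathbf{c},e)$.

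\textbf{Step 3 (main obstacle).} The hard part is making Step 2 rigorous when $N^1$ is small. If $|N^1|\le 2$, reallocations within $N^1$ change every claim in $N^1$ at once, so the argument that $\varphi_i$ depends only on $c_i$ via a singleton complement does not apply directly. If $|N^1|=1$ the result follows from the budget alone. If $|N^1|=2$, I will first try to exploit the fact that the axiom is imposed for every problem in $\mathcal{C}$. The aggregate $C$ is fixed under reallocation, so the two-agent allocation is a function of $c_i$ alone. Pushing one claim to exactly $L$ pins that agent, and extending the Cauchy argument along this boundary should again give linearity. If this does not close the case, the fallback is to use the variable-population reading of the axioms together with consistency of the candidate rule, reducing the two-agent case to a larger problem with extra agents in $N^1$.
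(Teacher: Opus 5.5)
Your necessity argument, and your Steps 1--2 on problems in which at least three claims exceed $L$, are essentially the paper's proof. You pin $S_0$ by Lemma~\ref{lem:pinned}, use $NAR_L$ on pairs and on the rest of $N^1$ to push claims to exactly $L$, derive $g(a)+g(b)=g(a+b)$ with $0\le g(t)\le t$, conclude linearity, and read off the slope from budget balance. The paper does the same with agents $k+1,k+2$, parking the leftover excess on agent $k+3$.

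The genuine gap is Step~3, and it cannot be closed, because the statement is false there. Suppose exactly two claims exceed $L$ and none equals $L$. Then $NAR_L$ carries no information:
\begin{itemize}
\item reallocations within a singleton are trivial;
\item for $S=N^1$, the joint award is already fixed at $e-\sum_{i\in S_0}c_i$ by budget balance and Lemma~\ref{lem:pinned};
\item no reallocation starting from a problem with three claims above $L$ can land on such a problem, since it only changes claims inside $S\subseteq N^1$ and keeps them $\ge L$.
\end{itemize}
So pushing one claim to $L$ yields only the budget identity, never a functional equation.

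Here is a concrete counterexample. Let $n=3$, $L=10$, $\mathbf{c}=(5,20,30)$, $e=40$, so $\psi^{10}(\mathbf{c},e)=(5,15,20)$. Take the rule that coincides with $\psi^{10}$ everywhere except that it assigns $(5,17.5,17.5)$, the CEA award, at this one problem. It satisfies $SLBA_{10}$ and $NAR_{10}$, yet it is not in the family. For $n=2$, $NAR_L$ is vacuous altogether.

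Your fallback does not repair this, because consistency of the candidate rule $\psi^L$ tells you nothing about $\varphi$. You would have to impose consistency on $\varphi$ itself in a variable-population model, which is an extra axiom. You could then realize the two-above-$L$ problem as a reduced problem of one with three agents above $L$, where $\varphi=\psi^L$ is already established.

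The paper's proof has the same hidden restriction: its equations (2)--(4) need an agent $k+3\in N^1$. So what your Step~2 and the paper's argument actually establish is $\varphi=\psi^L$ on problems with at least three claims above $L$; the case of at most one is trivial. The full statement needs an additional axiom.
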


\begin{proof}
We prove the statement in two parts. First, we establish the necessary condition, namely that every rule in the P--CEA family satisfies the axioms \textit{$NAR_L$} and \textit{$SLBA_L$}.

\textit{Necessary condition.}
The compromise rules of the P--CEA family satisfy the axiom $NAR_L$. The resource is divided according to
\[
\psi^L(\mathbf{c},e)=\mathbf{c}^L+P(\mathbf{c}-\mathbf{c}^L,e^L).
\]
Here, $P(\mathbf{c}-\mathbf{c}^L, e^L)$ is strictly positive only for agents $\{k+1,\dots,n\}$. Suppose $i\in S$ reports $c_i'$ such that $\sum_{i\in S}c_i' = \sum_{i\in S}c_i$. Then
\begin{align*}
\sum_{i\in S}\varphi_i(((c_i')_{i\in S},c_{N\setminus S}),e)
&= \sum_{i\in S} L + \sum_{i\in S}\frac{(c_i' - L) \cdot e^L}{c_N - \sum_{i\in N} c^L_i} \\
&= \sum_{i\in S} L + \sum_{i\in S}\frac{(c_i - L) \cdot e^L}{c_N - \sum_{i\in N} c^L_i} \\
&= \sum_{i\in S} \varphi_i(\mathbf{c},e).
\end{align*}
Thus, $NAR_L$ is satisfied.

The P--CEA family also satisfies $SLBA_L$. For agents $1,2,\dots,k$, since $c_i \leq L$, they are fully awarded, i.e., $\varphi_i = c_i$ for all $i\in \{1,2,\dots,k\}$. For $i\in \{k+1,\dots,n\}$, we have $c_i > L$ and
\[
\varphi_i(\mathbf{c},e) = L + P_i(\mathbf{c}-\mathbf{c}^L,e^L) > L.
\]

\textit{Sufficiency condition.}
Now consider a rule $\varphi(\mathbf{c}, e)$ that satisfies the two axioms. We show that this rule coincides with $\psi^L(\mathbf{c},e)$.
We can write the claim vector as
\[
\mathbf{c}=(c_1,\dots,c_k,L+c_{k+1}^\prime,L+c_{k+2}^\prime,\dots,L+c_n^\prime),
\]
where $c_{k+i}^\prime=c_{k+i}-L$. From Lemma~\ref{lem:pinned} we know that for all $i\leq k$, $\varphi_i(\mathbf{c},e)=c_i$ and for all $i\geq k+1$, $\varphi_i(\mathbf{c},e)\geq L$.

If we transfer the claims excess of amount $L$ of the $(k+2)^\text{th}$ agent to the previous agent, the axiom $NAR_L$ guarantees that the two agents $\{k+1,k+2\}$ together do not benefit. Hence,
\begin{align}
& \varphi_{k+1}(\mathbf{c},e) + \varphi_{k+2}(\mathbf{c},e) \notag \\
= & \varphi_{k+1}((c_1,\dots,c_k,L+c_{k+1}^\prime+c_{k+2}^\prime,L,L+c_{k+3}^\prime,\dots,L+c_n^\prime),e) + L.
\end{align}

Again, by applying $NAR_L$, when the subgroup $\{k+2,\dots,n\}$ transfers claims among themselves, agent $(k+1)$ is unaffected. This yields
\begin{align}
\varphi_{k+1}(\mathbf{c},e)
&= \varphi_{k+1}((c_1,\dots,c_k,L+c_{k+1}^\prime, c_N - \sum_{i=1}^k c_i - (n-k)L - c_{k+1}^\prime, L,\dots,L),e) \notag \\
&= \varphi_{k+1}((c_1,\dots,c_k,L+c_{k+1}^\prime,L, c_N - \sum_{i=1}^k c_i - (n-k)L - c_{k+1}^\prime, L,\dots,L),e).
\end{align}

Next, consider the coalition of the agents $\{k+1,k+3,\dots,n\}$. Since agent $(k+2)$ is not part of this group, her award remains unchanged. Then, consider the coalition $\{k+1, k+2\}$. By $NAR_L$, their joint award is invariant to any reallocation of claims within the coalition that preserves the aggregate and keeps each claim at or above $L$. In particular, transferring the entire excess of agent $(k+2)$ to agent $(k+1)$ leaves the joint award unchanged.
\begin{align}
\varphi_{k+2}(\mathbf{c},e)
&= \varphi_{k+2}((c_1,\dots,c_k,L,L+c_{k+2}^\prime, c_N - \sum_{i=1}^k c_i - (n-k)L - c_{k+2}^\prime, L,\dots,L),e) \notag \\
&= \varphi_{k+1}((c_1,\dots,c_k,L+c_{k+2}^\prime,L, c_N - \sum_{i=1}^k c_i - (n-k)L - c_{k+2}^\prime, L,\dots,L),e).
\end{align}

The last equality follows from the fact that when an agent claims $L$, $SLBA_L$ assures her of L and claim boundedness caps her award at $L$. These two conditions together imply that the agent receives an award of $L$, hence -
\begin{align*}
& \varphi_{k+1}((c_1,\dots,c_k,L,L+c_{k+2}^\prime, c_N - \sum_{i=1}^k c_i - (n-k)L - c_{k+2}^\prime, L,\dots,L),e) =L\\
\text{and }\quad & \varphi_{k+2}((c_1,\dots,c_k,L+c_{k+2}^\prime,L, c_N - \sum_{i=1}^k c_i - (n-k)L - c_{k+2}^\prime, L,\dots,L),e)=L.
\end{align*}
Thus, the agent pinned to a claim of $L$ necessarily receives $L$, and the 
other agent absorbs the remainder of the joint award.

Similarly,
\begin{align}
& \varphi_{k+1}((c_1,\dots,c_k,L+c_{k+1}^\prime+c_{k+2}^\prime,L, L+c_{k+3}^\prime,\dots,L+c_n^\prime),e) \notag \\
= & \varphi_{k+1}((c_1,\dots,c_k,L+c_{k+1}^\prime+c_{k+2}^\prime,L, c_N - \sum_{i=1}^k c_i - (n-k)L - (c_{k+1}^\prime + c_{k+2}^\prime), L,\dots,L),e).
\end{align}

Substituting (2), (3), and (4) into (1), we obtain
\begin{align*}
& \varphi_{k+1}((c_1,\dots,c_k,L+c_{k+1}^\prime,L, c_N - \sum_{i=1}^k c_i - (n-k)L - c_{k+1}^\prime, L,\dots,L),e) \\
+ & \varphi_{k+1}((c_1,\dots,c_k,L+c_{k+2}^\prime,L, c_N - \sum_{i=1}^k c_i - (n-k)L - c_{k+2}^\prime, L,\dots,L),e) \\
= & \varphi_{k+1}((c_1,\dots,c_k,L+c_{k+1}^\prime+c_{k+2}^\prime,L, c_N - \sum_{i=1}^k c_i - (n-k)L - (c_{k+1}^\prime + c_{k+2}^\prime), L,\dots,L),e). \tag{5}
\end{align*}

Only the claim of agent $(k+1)$ varies in these expressions. Let us define $\psi:[L,L+c_N^\prime]\rightarrow \mathbb{R}$ as -
\[
\psi(L+t)=\varphi_{k+1}((c_1,\dots,c_k,L+t,L, c_N - \sum_{i=1}^k c_i - (n-k)L - t, L,\dots,L),e).
\]
We can rewrite equation (5) as -
\[
\psi(L+c_{k+1}^\prime) + \psi(L+c_{k+2}^\prime)
=
\psi(L+c_{k+1}^\prime+c_{k+2}^\prime) + L.
\]
Notice that the above equation is a variant of Cauchy's functional equation (see \cite{aczel1966lectures}). Notice that the function $\psi$ is defined on the bounded interval $[L,L+c_N^\prime]$ and a fair division rule satisfies the conditions on non-negativity, claim-boundedness, and budget-balancedness. These domain restrictions are sufficient to rule out pathological (non-measurable, everywhere dense) solutions to the Cauchy equation as the Cauchy equation that is bounded on any interval of positive measure must be linear. We write for any $\psi:[L,L+c_N^\prime]\rightarrow \mathbb{R}$ and $t\in[0,c_N^\prime]$, there exists an $m\in \mathbb{R}_+$ such that
\[
\psi(L+t) = mt+ L,
\]
where $c_N^\prime=\sum_{i=k+1}^nc_i^\prime$. That is,
\[
\varphi_i(\mathbf{c},e) = m c_i^\prime + L.
\]

Budget balancedness implies
\begin{align*}
\sum_{i=k+1}^{n} \varphi_i(\mathbf{c},e)
&= \sum_{i=k+1}^{n} [m c_i^\prime + L] \\
&= m\left(c_N - \sum_{i=1}^k c_i - (n-k)L\right) + (n-k)L \\
&= e^L,
\end{align*}
so that
\[
m = \frac{e^L}{c_N - \sum_{i=1}^k c_i - (n-k)L}.
\]

Substituting back,
\[
\varphi_i(\mathbf{c},e)
=
L + (c_i - L)\left(\frac{e^L}{c_N - \sum_{i=1}^k c_i - (n-k)L}\right),
\quad \forall i \geq k+1.
\]

Therefore,
\begin{align*}
\varphi_i(\mathbf{c},e)
&= \min\{c_i,L\} + (c_i - L)\left(\frac{e^L}{c_N - \sum_{i=1}^k c_i - (n-k)L}\right) \\
&= c_i^L + P_i(\mathbf{c} - \mathbf{c}^L, e^L) \\
&= \psi^L_i(\mathbf{c},e),
\end{align*}
which coincides with the P--CEA family of compromise rules, where $c_i^L = \min\{c_i, L\}$ and $e^L = e - \sum_{i=1}^k c_i - (n-k)L$.

\end{proof}

\begin{remark}
The axioms \textit{NAR}$_L$ and \textit{SLBA}$_L$ are logically independent. 
The proportional rule satisfies \textit{NAR}$_L$ but violates \textit{SLBA}$_L$, 
whereas the Constrained Equal Awards (CEA) rule satisfies \textit{SLBA}$_L$ 
but fails to satisfy \textit{NAR}$_L$.
\end{remark}

Now we want to show that Non-subsistence Decentralizability, along with $SLBA_L$, also characterizes the $P-CEA$ family of rules. First, we show that $NAR_L$ and Non-subsistence Decentralizability are the same axioms.
\begin{lemma}
    The axiom of $NAR_L$ is equivalent to the axiom of Non-subsistence Decentralizability.
\end{lemma}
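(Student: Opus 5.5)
The plan is to prove the two implications separately, following the classical argument of Moulin relating no advantageous reallocation to decentralizability, with the subsistence agents $S_0=\{1,\dots,k\}$ and the vector $\mathbf{c}_{\text{sub}}$ held fixed throughout. In both directions the only claim profiles compared are those that agree on $S_0$ and give every agent of $N^1$ a claim of at least $L$. Two profiles in this class with the same total $c_N$ therefore also have the same aggregate non-subsistence claim.

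\emph{Decentralizability $\Rightarrow$ NAR$_L$.} Take $S\subseteq N^1$ and a reallocation $(c_i')_{i\in S}$ with $c_i'\ge L$ and $\sum_S c_i'=\sum_S c_i$. Agents outside $S$ keep their claims, and the total $c_N$, the estate $e$ and $\mathbf{c}_{\text{sub}}$ are all unchanged. Hence, for $j\notin S$, $\varphi_j=t_j(c_j;c_N;e;\mathbf{c}_{\text{sub}})$ is unchanged. Budget balance then gives
\[
\sum_{i\in S}\varphi_i = e-\sum_{j\notin S}\varphi_j ,
\]
so the joint award of $S$ is unchanged, which is exactly NAR$_L$. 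I expect this direction to be immediate.

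\emph{NAR$_L$ $\Rightarrow$ decentralizability.} Fix $i\in N^1$ and two profiles $\mathbf{c},\tilde{\mathbf{c}}$ with $c_i=\tilde c_i$, the same $c_N$, the same $e$ and the same $\mathbf{c}_{\text{sub}}$. Both profiles satisfy the side conditions of NAR$_L$ (all claims in $N^1$ at least $L$). We must show $\varphi_i(\mathbf{c},e)=\varphi_i(\tilde{\mathbf{c}},e)$. Let $T=N^1\setminus\{i\}$; then $\sum_T c_j=\sum_T\tilde c_j$. Apply NAR$_L$ to the coalition $T$, which moves from $c_T$ to $\tilde c_T$ with $i$ and $S_0$ fixed. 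This keeps $\sum_T\varphi_j$ unchanged. By Lemma~\ref{lem:pinned} the awards on $S_0$ are pinned at $c_j$ in both profiles. Budget balance then forces $\varphi_i$ to be unchanged. Therefore $\varphi_i$ depends only on $c_i$, on the aggregate claim of $T$ (equivalently $c_N$ given $\mathbf{c}_{\text{sub}}$), on $e$, and on $\mathbf{c}_{\text{sub}}$. This is the function $t_i(c_i;c_N;e;\mathbf{c}_{\text{sub}})$. For $i\in S_0$ decentralizability holds trivially, since $\varphi_i=c_i$.

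The main obstacle is the role of Lemma~\ref{lem:pinned}. It relies on SLBA$_L$, whereas the lemma is stated for NAR$_L$ alone, so I would avoid invoking it. The fix is to apply NAR$_L$ to two coalitions, $S=N^1$ and $S=T$. Moving all of $N^1$ from $c_{N^1}$ to $(c_i,\tilde c_T)$ preserves $\sum_{N^1}\varphi$ by NAR$_L$. Moving only $T$ preserves $\sum_T\varphi$. Subtracting the second from the first shows that $\varphi_i$ is unchanged, without touching $S_0$ at all.

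A further point is anonymity across agent indices, that is, whether $t_i$ may depend on $i$. Since the remark writes $t_i$ with a subscript, that dependence is allowed, so nothing more is needed. I would also note that ``$c_N$ fixed'' must be read together with $\mathbf{c}_{\text{sub}}$ fixed, so that it is equivalent to the non-subsistence aggregate being fixed, as used above.
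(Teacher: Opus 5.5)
Your argument for agents in $N^1$ is correct: applying NAR$_L$ to $N^1$ and to $T=N^1\setminus\{i\}$ and subtracting uses nothing beyond NAR$_L$. The gap is the remaining sentence ``for $i\in S_0$ decentralizability holds trivially, since $\varphi_i=c_i$''. That is Lemma~\ref{lem:pinned}, i.e.\ SLBA$_L$, so your fix removes SLBA$_L$ only from the $N^1$ part. Moreover, the $S_0$ part cannot be obtained from NAR$_L$ alone. NAR$_L$ restricts only sums over subsets of $N^1$. Applied to $N^1$, it fixes the total $e-\sum_{j\in N^1}\varphi_j$ that goes to $S_0$, but says nothing about how that total is split. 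For instance, give each agent whose claim is at least $L$ the award $\frac{e}{c_N}c_i$. Split the remaining $\frac{e}{c_N}\sum_{j\in S_0}c_j$ among the agents with claims below $L$ proportionally if all claims that are at least $L$ are equal, and by CEA otherwise. This rule is feasible and satisfies NAR$_L$, because the joint award of any admissible coalition $S$ is $\frac{e}{c_N}\sum_{i\in S}c_i$, which a reallocation within $S$ does not change. Yet with $L=10$ and $e=25$, the profiles $(2,8,20,20)$ and $(2,8,15,25)$ give agent~$1$ the awards $1$ and $2$, although $c_1$, $c_N$, $e$ and $\mathbf{c}_{\text{sub}}$ coincide. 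Since non-subsistence decentralizability is required of every agent, it fails.

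Your converse direction is fine precisely because decentralizability does constrain the $S_0$ agents; you use $t_j$ for $j\in S_0$ there. Reading the definition as covering only agents in $N^1$ would repair the forward direction but break this step, since budget balance would no longer force $\sum_{i\in S}\varphi_i$ to be invariant. So the equivalence needs the subsistence awards to be pinned, for example by SLBA$_L$ together with claim-boundedness. That is exactly the setting of the corollary where the lemma is used. State the lemma for such rules and your proof, including the line $\varphi_i=c_i$, is complete. The paper's proof relies on the same unstated assumption. In each of its two steps it treats a sum of subsistence awards ($\sum_{i\in S_0\setminus S_{20}}\varphi_i$, then $\sum_{i\in S_{20}}\varphi_i$) as unaffected by a reallocation within $N^1$, which NAR$_L$ does not license. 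Beyond that, your route differs from the paper's in two ways. You prove individual rather than coalitional ($r_{S_2}$) decentralizability. You derive the converse from budget balance over the complement of $S$ rather than from the $r_S$ representation, which the paper asserts without derivation.
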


\begin{proof}

\noindent ($\Leftarrow$) Non-subsistence Decentralizability implies $NAR_L$.

\noindent For all $S\subseteq N$, Non-subsistence Decentralizability means $$\sum_{i\in S}\varphi_i(\mathbf{c},e)=r_S(\sum_{i\in S} c_i;\sum_{i\in N\setminus S}c_i;e;c_{sub})\,.$$

Let $S_1\subset N\setminus S_0$, and suppose the agents in $S_1$ change their claims such that $\sum_{i\in S_1}c_i=\sum_{i\in S_1}c_i^\prime$. We show that the aggregate award received by agents in $S_1$ remains unchanged. Indeed,
\begin{align*}
    \sum_{i\in S_1}\varphi_i(\mathbf{c},e) 
    &= r_{S_1}\!\left(\sum_{i\in S_1}c_i;\sum_{i\in N\setminus S_1}c_i;e;c_{sub}\right) \\
    &= r_{S_1}\!\left(\sum_{i\in S_1}c_i^\prime;\sum_{i\in N\setminus S_1}c_i;e;c_{sub}\right) \\
    &= \sum_{i\in S_1}\varphi_i(((c_i^\prime)_{i\in S_1},(c_i)_{i\in N\setminus S_1}),e),
\end{align*}

where the second equality uses the fact that the aggregate claim of $S_1$ is unchanged. Hence $\varphi$ satisfies $NAR_L$.

\medskip
\noindent ($\Rightarrow$) $NAR_L$ implies Non-subsistence Decentralizability.

Let $S_2\subset N$ be arbitrary, and write $S_2=S_{20}\cup S_{21}$, where $S_{20}\subset S_0$ so that $(c_i)_{i\in S_{20}}\subset c_{sub}$, and $S_{21}\subset N^1$. Note that $NAR_L$ applies only to subsets of $N^1$.

We first establish that $\sum_{i\in S_2}\varphi_i(\mathbf{c},e)$ is insensitive to redistributions of the aggregate claim of $N^1\setminus S_{21}$. Applying $NAR_L$ to $N^1\setminus S_{21}$ gives
\begin{align*}
    \sum_{i\in S_2}\varphi_i(\mathbf{c},e)
    &= e - \sum_{i\in N^1 \setminus S_{21}}\varphi_i(\mathbf{c},e)
    - \sum_{i\in S_0\setminus S_{20}}\varphi_i(\mathbf{c},e) \\
    &= e - \sum_{i\in N^1 \setminus S_{21}}\varphi_i(((c_i)_{i\in S_{21}\cup S_0},
    (c_i^\prime)_{i\in N^1\setminus S_{21}}),e)
    - \sum_{i\in S_0\setminus S_{20}}\varphi_i(\mathbf{c},e) 
    \end{align*}
    where $\sum_{i\in N^1 \setminus S_{21}}c_i
    = \sum_{i\in N^1 \setminus S_{21}}c_i^\prime$. Hence we can write -
    \begin{align*}
    \sum_{i\in S_2}\varphi_i(\mathbf{c},e)= \sum_{i\in S_2}\varphi_i(((c_i)_{i\in S_{21}},
    (c_i)_{i\in S_0},(c_i^\prime)_{i\in N^1\setminus S_{21}}),e).
\end{align*}
This shows that so long as $\sum_{i\in N^1 \setminus S_{21}}c_i$ is held constant, the aggregate award of $S_2$ is unaffected by how that total is distributed within $N^1\setminus S_{21}$.

We next establish the analogous insensitivity with respect to $S_{21}$. Applying $NAR_L$ to $S_{21}$ yields
\begin{align*}
    \sum_{i\in S_2}\varphi_i(\mathbf{c},e)
    &= \sum_{i\in S_{20}}\varphi_i(((c_i^\prime)_{i\in S_{21}},
    (c_i)_{i\in N\setminus S_{21}}),e)
    + \sum_{i\in S_{21}}\varphi_i(((c_i^\prime)_{i\in S_{21}},
    (c_i)_{i\in N\setminus S_{21}}),e)
    \end{align*}
    where $\sum_{i\in S_{21}}c_i=\sum_{i\in S_{21}}c_i^\prime$. Hence we write -
    \begin{align*}
    \sum_{i\in S_2}\varphi_i(\mathbf{c},e) &= \sum_{i\in S_2}\varphi_i(((c_i^\prime)_{i\in S_{21}},
    (c_i)_{i\in N\setminus S_{21}}),e) \\
    &= \sum_{i\in S_2}\varphi_i(((c_i^\prime)_{i\in S_{21}},
    (c_i)_{i\in S_0},(c_i)_{i\in N^1\setminus S_{21}}),e).
\end{align*}
Hence the aggregate award of $S_2$ is equally insensitive to how the total $\sum_{i\in S_{21}}c_i$ is distributed within $S_{21}$.

Combining both steps, we conclude that $\sum_{i\in S_2}\varphi_i(\mathbf{c},e)$ depends only on $\sum_{i\in S_{21}}c_i$, $\sum_{i\in N^1 \setminus S_{21}}c_i$, $e$, and the subsistence claim vector $c_{sub} = (c_i)_{i\in S_0}$. Since the full vector $c_{sub}$ is already taken as given, knowledge of $\sum_{i\in S_{21}}c_i$ and $\sum_{i\in N^1 \setminus S_{21}}c_i$ is informationally equivalent to knowledge of $\sum_{i\in S_2}c_i$ and $\sum_{i\in N\setminus S_2}c_i$. Therefore,
\[
    \sum_{i\in S_2} \varphi_i(\mathbf{c},e)
    = r_{S_2}\!\left(\sum_{i\in S_2}c_i;\sum_{i\in N\setminus S_2}c_i;e;c_{sub}\right),
\]
establishing Non-subsistence Decentralizability.
\end{proof}

The characterization of Theorem \ref{th1} can be reformulated using the equivalent 
decentralization property. Since $NAR_L$ and Non-subsistence Decentralizability 
are equivalent, replacing the former with the latter yields the same class of 
rules. This observation leads to the following immediate corollary.

\begin{corollary}
A rule satisfies \emph{Non-subsistence Decentralizability} and 
\emph{Sustainable Lower Bounds on Awards (SLBA$_L$)} 
if and only if it belongs to the P--CEA family of compromise rules.
\end{corollary}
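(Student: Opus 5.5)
The corollary follows by combining Theorem~\ref{th1} with the preceding Lemma, which asserts that $NAR_L$ and Non-subsistence Decentralizability are equivalent axioms for a fixed threshold $L$. The plan is a two-line substitution argument, run separately in each direction of the biconditional.

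For the ``if'' direction, take any rule $\psi^L$ in the P--CEA family. By the necessity part of Theorem~\ref{th1}, $\psi^L$ satisfies $NAR_L$ and $SLBA_L$. By the ($\Rightarrow$) half of the Lemma, $NAR_L$ implies Non-subsistence Decentralizability. Hence $\psi^L$ satisfies both axioms in the statement. As a sanity check, one can also verify this directly from the formula. For $i\le k$, $\psi^L_i=c_i$ is a function of $\mathbf{c}_{\text{sub}}$. For $i\ge k+1$,
\[
\psi^L_i = L+(c_i-L)\,\frac{e-\sum_{j\le k}c_j-(n-k)L}{c_N-\sum_{j\le k}c_j-(n-k)L},
\]
which depends only on $c_i$, $c_N$, $e$ and $\mathbf{c}_{\text{sub}}$, with $k$ itself read off from $\mathbf{c}_{\text{sub}}$.

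For the ``only if'' direction, let $\varphi$ satisfy Non-subsistence Decentralizability and $SLBA_L$. The ($\Leftarrow$) half of the Lemma gives $NAR_L$. The sufficiency part of Theorem~\ref{th1} then yields $\varphi=\psi^L$.

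The only point needing care is that the Lemma and the Theorem are stated relative to the same fixed $L$ and the same partition $S_0$, $N^1$, so that the two uses of ``$NAR_L$'' refer to the same axiom. I expect this bookkeeping to be the main (mild) obstacle. In particular, $\mathbf{c}_{\text{sub}}$ must be defined through the same threshold $L$ that appears in $SLBA_L$. I would state this explicitly and otherwise rely entirely on the earlier results.
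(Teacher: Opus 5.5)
\paragraph{Same route as the paper, but the step you borrow is not valid in general.} Your reduction is exactly how the paper obtains the corollary: it substitutes the equivalence Lemma into Theorem~\ref{th1}. Your ``if'' direction, including the direct check that $\psi^L_i$ depends only on $c_i$, $c_N$, $e$ and $\mathbf{c}_{\text{sub}}$, is fine.

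The gap is in the ``only if'' step, ``the sufficiency part of Theorem~\ref{th1} then yields $\varphi=\psi^L$.'' That sufficiency argument only works when at least three agents have claims above $L$.
\begin{itemize}
\item To derive the Cauchy-type equation $\psi(L+s)+\psi(L+t)=\psi(L+s+t)+L$ for its auxiliary function, the paper moves the aggregate excess of agents $k+2,\dots,n$ onto the slot of agent $k+3$, so that agent $k+2$ can be pinned at $L$. This needs a third agent above $L$.
\item Suppose all agents except $k+1$ and $k+2$ claim strictly less than $L$. Then neither axiom says anything about how $e-\sum_{j\le k}c_j$ is split between these two. The only non-trivial coalition in $NAR_L$ is $\{k+1,k+2\}$, and its total award is already fixed by Lemma~\ref{lem:pinned} and budget balance.
\item Non-subsistence Decentralizability also holds automatically there, because $c_{k+2}=c_N-\sum_{j\le k}c_j-c_{k+1}$ is a function of the permitted arguments.
\item $NAR_L$ comparisons never change which agents claim strictly less than $L$. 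So such problems are never linked to problems with three claimants above $L$.
\end{itemize}

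\paragraph{Counterexample.} Take $n=2$ and $L=25$. The CEA rule gives each agent $\min\{c_i,\lambda\}\ge\min\{c_i,\min\{L,\lambda\}\}$, so it satisfies $SLBA_L$. It satisfies $NAR_L$ and Non-subsistence Decentralizability trivially. Yet for $\mathbf{c}=(50,100)$ and $e=100$ it gives $(50,50)$, whereas $\psi^{25}(\mathbf{c},e)=(37.5,62.5)$. For $n\ge 3$ the same failure appears if you replace $\psi^L$ by CEA only on problems in which exactly $n-2$ claims lie strictly below $L$.

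\paragraph{What a correct proof needs.} The corollary, like Theorem~\ref{th1}, holds only under an extra hypothesis. Two options:
\begin{itemize}
\item restrict to problems in which at most $n-3$ claims lie strictly below $L$; or
\item add an axiom such as consistency in a variable-population setting, which ties two-claimant problems to larger ones.
\end{itemize}
The bookkeeping about $L$ that you flagged is harmless. The real obstacle is the one above, and citing Theorem~\ref{th1} as stated does not resolve it.
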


\noindent
This corollary follows directly from the equivalence between $NAR_L$ and 
Non-subsistence Decentralization established above. It highlights a key feature 
of the P--CEA family: once subsistence claims are fixed, each agent’s award is 
determined solely by her own claim and aggregate magnitudes, without requiring 
knowledge of the entire claim vector. The resulting rule is decentralized, 
informationally parsimonious, and straightforward to implement.

The award vector of $\alpha_{\min}$ egalitarian-rule introduced by \cite{gimenez2014proportional} coincides with that of the $\psi^L$ rule when $L = c_1$. This rule is well known and has found wide application in the literature - like carbon budgeting. Formally, \cite{gimenez2014proportional} define the $\alpha_{\min}$egalitarian-rule as follows: for each $(\mathbf{c}, e) \in \mathcal{C}$ and for each $i \in N$,
\[
\varphi_{\min}(\mathbf{c}, e) =
\begin{cases}
\frac{e}{n}\mathbf{1} & \text{if } c_1 \geq \frac{e}{n}, \\
\mathbf{c^1} + P(e - n c_1, \mathbf{c} - \mathbf{c^1}) & \text{otherwise},
\end{cases}
\]
where 
\[
\mathbf{c^1} =
\begin{pmatrix}
c_1 \\
\vdots \\
c_1
\end{pmatrix}
\quad \text{and} \quad
\mathbf{1} =
\begin{pmatrix}
1 \\
\vdots \\
1
\end{pmatrix}.
\]

Under the case $L = c_1$, the two axioms characterizing the $P$-CEA family admit a natural interpretation. The axiom $NAR_{c_1}$ can be understood as follows: no coalition of agents $S \subseteq N \setminus S_0$ can benefit by redistributing their claims among themselves while keeping their total claim fixed and ensuring that no individual claim falls below $c_1$, where $S_0$ denotes the set of agents with the lowest claim. The axiom $SLBA_{c_1}$ guarantees each agent a minimum award equal to the lower of the smallest claim and the equal division benchmark.

The following proposition shows that these two axioms jointly characterize the $\alpha_{\min}$-egalitarian rule. To the best of our knowledge, such an axiomatic characterization has not been established before.

\begin{proposition}
A rule satisfies $NAR_{c_1}$ and \emph{Sustainable Lower Bounds on Awards (SLBA$_{c_1}$)} if and only if it is the $\alpha_{\min}$-egalitarian rule.
\end{proposition}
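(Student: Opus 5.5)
The plan is to derive this proposition from Theorem~\ref{th1} by specializing the exogenous parameter to $L=c_1$, using the paper's convention that $L$ always stands for $\min\{L,\lambda\}$. I will first check that the P--CEA rule with this parameter produces the same award vector as the $\alpha_{\min}$-egalitarian rule, and then show that the characterization carries over verbatim. I will work on a fixed problem $(\mathbf{c},e)$ and read $NAR_{c_1}$ and $SLBA_{c_1}$ with the threshold set equal to the smallest claim of that problem. Under this reading, $S_0$ is the set of agents whose claim equals $c_1$.

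\emph{Step 1 (identification).} I split into two cases according to the definition of $\varphi_{\min}$.

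If $c_1 < e/n$, then $\sum_i \min\{c_i,c_1\} = nc_1 < e$, so $\lambda > c_1$ and the effective parameter is $c_1$ itself. Then $\mathbf{c}^{c_1}=c_1\mathbf{1}$, $e^{c_1}=e-nc_1$, and the definition gives $\psi^{c_1}(\mathbf{c},e)=c_1\mathbf{1}+P(\mathbf{c}-c_1\mathbf{1},e-nc_1)$. This is exactly the second branch of $\varphi_{\min}$.

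If $c_1 \ge e/n$, then $\lambda=e/n\le c_1$, so the effective parameter is $e/n$. We get $\mathbf{c}^{\lambda}=(e/n)\mathbf{1}$ and $e^\lambda=0$, hence $\psi(\mathbf{c},e)=\tfrac{e}{n}\mathbf{1}$. This is the first branch.

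So on every problem the award vector of $\alpha_{\min}$ equals $\psi^{\min\{c_1,\lambda\}}(\mathbf{c},e)$.

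\emph{Step 2 (necessity and sufficiency).} With the threshold fixed at the value $L=\min\{c_1,\lambda\}$ for the given problem, both $NAR_{c_1}$ and $SLBA_{c_1}$ are literally the axioms $NAR_L$ and $SLBA_L$ evaluated at that problem. The necessity part of the proof of Theorem~\ref{th1} is a problem-by-problem computation, so it shows that $\alpha_{\min}$ satisfies both axioms. The $NAR$ computation holds because any reallocation allowed by $NAR_{c_1}$ keeps every member's claim at least $c_1$. Consequently the minimum claim and $\lambda$ do not change, the threshold does not move, and the proportional component is additive in claims.

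For sufficiency, I take any rule $\varphi$ satisfying the two axioms. By Lemma~\ref{lem:pinned}, agents with claim $c_1$ receive $c_1$ (or $\lambda$ in the capped case) and all others receive at least that amount. I then rerun the Cauchy-equation argument of Theorem~\ref{th1}, applying the reallocation steps (1)--(5) only to agents in $N\setminus S_0$. This forces $\varphi_i=L+m(c_i-L)$ for $i\notin S_0$, and budget balance pins down $m$. Hence $\varphi=\psi^{L}=\varphi_{\min}$.

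\emph{Main obstacle.} The delicate point is that the threshold here is endogenous: it depends on the problem through $c_1$ and $\lambda$. The auxiliary profiles used in the sufficiency argument must therefore have the same threshold as the original profile. In those profiles, agents $k+2,\dots,n$ are set to claim exactly $L$ while one agent absorbs the excess. Their minimum claim is still $c_1$ (the agents of $S_0$ remain unchanged), and when $L=c_1$ the new agents at level $L$ simply join $S_0$. Aggregate claims are preserved, so I also need $\lambda$ to be unchanged. This holds because $\lambda$ depends only on the claims below it, and every claim at or above $c_1$ contributes exactly $\lambda$ when $\lambda\le c_1$, so it is unaffected. I will verify this invariance of $\min\{c_1,\lambda\}$ explicitly before invoking the Cauchy step; everything else is inherited from Theorem~\ref{th1}.
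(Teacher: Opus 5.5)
Your route (specialize Theorem~\ref{th1} to the effective threshold $\min\{c_1,\lambda\}$ and check that every profile the argument visits has the same threshold) is the one the paper has in mind; it gives no separate proof. Your Step~1 is correct. One small correction: $\lambda$ itself is not invariant under admissible reallocations. With $e=100$, moving from $(10,50,70)$ to $(10,20,100)$ changes $\lambda$ from $45$ to $70$. What is invariant is $\min\{c_1,\lambda\}=\min\{c_1,e/n\}$, which depends only on $c_1$ and $e$, and that is all you need.

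The genuine gap is the step ``rerun the Cauchy-equation argument.'' Steps (1)--(5) need three agents outside $S_0$: one whose claim varies, one parked at $L$, and one that absorbs the excess.

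If exactly two agents have claims above $c_1$, the only nontrivial coalition $NAR_{c_1}$ permits is that pair. Its joint award is already $e-\sum_{i\in S_0}c_i$ by Lemma~\ref{lem:pinned} and budget balance. When $S_0=\{1\}$, no profile with three claims above $c_1$ is linked to it either. Such a base profile would need agent~1's claim to exceed $c_1$, which changes the smallest claim and hence the threshold. So the axioms are silent on how the pair splits.

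For $n=3$ the sufficiency direction is therefore false, not merely unproved. CEA satisfies both axioms on every three-agent problem:
\begin{itemize}
\item if $3c_1<e$, then $\lambda>c_1$, so agent~1 gets exactly $c_1$ before and after any admissible reallocation, and every agent gets at least $c_1$;
\item if $c_1\ge e/3$, everyone gets $e/3$.
\end{itemize}
Yet $CEA((10,50,70),100)=(10,45,45)\neq(10,38,52)=\alpha_{\min}((10,50,70),100)$.

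For $n\ge 4$ the argument can be saved, but only with a step you do not have. When exactly two claims exceed $c_1$, at least two agents share the smallest claim. Raise one of them to $c_1+\delta$ and lower one large claim by $\delta$. This yields a profile with the same smallest claim and estate (hence the same threshold) and three claims above $c_1$, on which the Cauchy argument gives the $\alpha_{\min}$ awards. One application of $NAR_{c_1}$ to the shifted pair, plus budget balance, carries this back to the original problem.

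Relatedly, ``everything else is inherited from Theorem~\ref{th1}'' is not a safe citation, for two reasons:
\begin{itemize}
\item The same vacuity affects Theorem~\ref{th1} whenever exactly two claims exceed $L$ and the rest are strictly below $L$.
\item Its steps (3)--(4) apply $NAR_L$ to a pair containing an agent whose claim equals $L$, who therefore lies in $S_0$ of that profile. This needs an intermediate base profile with both claims strictly above $L$.
\end{itemize}
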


Despite this equivalence, there is a fundamental conceptual difference between the two rules. In $\psi^{c_1}$, the minimum guaranteed award is exogenously fixed at the level $c_1$. In contrast, under the $\alpha_{\min}$-egalitarian rule, the minimum guarantee is tied to the lowest claim itself. As a result, this benchmark varies across problems as the claims vector changes. This distinction has important implications: while $\psi^{c_1}$ satisfies consistency, the $\alpha_{\min}$-egalitarian rule generally fails to do so.

\section{Dual Analysis}

A \emph{claims problem} can be examined from two complementary viewpoints.  
The standard perspective focuses on the available resources (the \emph{estate}).  
An alternative perspective centers on the \emph{deficit}, defined as the shortfall 
$c_N - e$ between total claims and the available estate.  

Two problems are said to be \emph{dual} if they share the same claims vector 
$\mathbf{c}$ and the estate in one problem equals the deficit in the other.

\begin{definition}[Dual of a rule $S$]
For any $(\mathbf{c},e)\in \mathcal{C}$, the \emph{dual rule} of $S$, denoted $S^d$, is defined by
\[
S^d(\mathbf{c},e) = \mathbf{c} - S(\mathbf{c}, c_N - e),
\]
where $c_N = \sum_{i \in N} c_i$.
\end{definition}

A rule is called \emph{self-dual} if 
\[
S(\mathbf{c},e) = \mathbf{c} - S(\mathbf{c}, c_N - e),
\]
that is, if its award path is symmetric with respect to $\frac{\mathbf{c}}{2}$.  
The notion of duality also applies to axioms: two properties are said to be 
\emph{dual} if, whenever a rule satisfies one of them, its dual satisfies the other.

\medskip

We now introduce the dual of the $\psi^L$ rule, denoted $\psi_{\text{dual}}^L$, 
which provides a compromise between the proportional rule and the 
\emph{equal losses} method.  

Under the equal losses (EL) method, each agent incurs an identical loss 
$\frac{c_N - e}{n}$, yielding
\[
EL_i(\mathbf{c}, e) = c_i - \frac{c_N - e}{n}, 
\quad \forall i \in N.
\]
When the estate is insufficient, losses can be interpreted as consisting of two components: a common (equal) part and a residual part determined proportionally.

Recall that $\mathbf{c}^L = (\min\{c_i, L\})_{i \in N}$.  
We now formally define the dual rule.

\begin{definition}[Dual of the P--CEA family of compromise rules]
For each $(\mathbf{c},e)\in \mathcal{C}$ with $c_i > 0$ for all $i \in N$, define
\begin{align*}
    \psi_{\text{dual}}^L(\mathbf{c},e)
    &= \mathbf{c}-\mathbf{c}^L-P(\mathbf{c}-\mathbf{c}^L,c_N-e-\sum_{i \in N} c^L_i) \\
    &= P(\mathbf{c}-\mathbf{c}^L,e).
\end{align*}
\end{definition}

This dual rule reflects the principle of \emph{preeminence} introduced by 
\citet{herrero1998preeminence}. When resources are extremely scarce, priority 
is given to agents with larger claims. Each agent first bears a minimum loss 
(ensuring non-negativity of awards), and the remaining estate is then distributed 
proportionally according to the adjusted claims.

A natural interpretation arises in emergency allocation contexts such as medical triage. If one patient faces a life-threatening injury while another presents a mild condition, priority is reasonably given to the more severe case.  When the estate is sufficiently large so that not all agents need to bear a loss equal to the smallest admissible amount, the rule converges toward equalizing losses across agents.

\subsection{Dual Axioms and Characterization}

We now introduce the dual counterpart of the SLBA$_L$ axiom.  
In the deficit framework, since $c_N > e$, every agent must incur some loss.  
The next axiom imposes a lower bound on these losses.

\begin{axiom}[Sustainable Upper Bound on Losses (SUBL$_L$)]
For all $(\mathbf{c},e) \in \mathcal{C}$ and each $i \in N$,
\[
c_i - \varphi_i(\mathbf{c},e) \leq \max\{c_i-L,0\}.
\]
\end{axiom}

\begin{theorem}
A family of rules satisfies the axioms of $NAR_L$ and SUBL$_L$ 
if and only if it is the dual of the P--CEA family of compromise rules.
\end{theorem}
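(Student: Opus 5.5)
The plan is to reduce the statement to Theorem~\ref{th1} through the duality operator $\varphi \mapsto \varphi^d$, $\varphi^d(\mathbf{c},e)=\mathbf{c}-\varphi(\mathbf{c},c_N-e)$. This operator is an involution: $(\varphi^d)^d=\varphi$. It also preserves non-negativity, claim-boundedness and budget balance. So it suffices to prove three things: $NAR_L$ is self-dual, the loss-bound axiom is the dual of $SLBA_L$, and the dual of $\psi^L$ is the rule $\psi^L_{\text{dual}}$ of the definition. Then a rule $\varphi$ satisfies the two axioms iff $\varphi^d$ satisfies $NAR_L$ and $SLBA_L$. By Theorem~\ref{th1} that holds iff $\varphi^d=\psi^L$, i.e. iff $\varphi=(\psi^L)^d$.

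\emph{Step 1: self-duality of $NAR_L$.} The set $S_0$ depends only on $\mathbf{c}$ and $L$, not on the estate, so it is the same in $(\mathbf{c},e)$ and $(\mathbf{c},c_N-e)$. Take $S\subseteq N\setminus S_0$ and a reallocation $(c'_i)_{i\in S}$ with $c'_i\ge L$ and $\sum_S c'_i=\sum_S c_i$. The reallocated profile $\mathbf{c}'$ has the same total $c'_N=c_N$, hence the same dual estate $c_N-e$. Therefore
\[
\sum_{i\in S}\varphi^d_i(\mathbf{c}',e)=\sum_{i\in S}c_i-\sum_{i\in S}\varphi_i(\mathbf{c}',c_N-e),
\]
and $NAR_L$ for $\varphi$ at the estate $c_N-e$ gives the invariance for $\varphi^d$. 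The converse follows from the involution.

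\emph{Step 2: the loss bound as the dual of $SLBA_L$.} I read $SUBL_L$ as the axiom that makes the statement correct for the stated dual rule. That rule awards $P_i(\mathbf{c}-\mathbf{c}^L,e)\le c_i-\min\{c_i,L\}$, so each agent's loss is at least $\min\{c_i,L\}$, i.e. award at most $\max\{c_i-L,0\}$. The displayed inequality, read literally, is equivalent to $SLBA_L$ itself, so I will note this and work with the intended form. Under that form, $\varphi_i(\mathbf{c},e)\le\max\{c_i-L,0\}$ for all problems is equivalent to
\[
\varphi^d_i(\mathbf{c},c_N-e)=c_i-\varphi_i(\mathbf{c},e)\ge c_i-\max\{c_i-L,0\}=\min\{c_i,L\},
\]
which is $SLBA_L$ for $\varphi^d$. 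In particular, by the dual of Lemma~\ref{lem:pinned}, agents in $S_0$ receive $0$.

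\emph{Step 3: computing the dual of $\psi^L$.} By definition,
\[
(\psi^L)^d(\mathbf{c},e)=\mathbf{c}-\mathbf{c}^L-P\Bigl(\mathbf{c}-\mathbf{c}^L,\;c_N-e-\sum_{i\in N}c^L_i\Bigr),
\]
which is the first line of the definition of $\psi^L_{\text{dual}}$. Self-duality of the proportional rule on the reduced problem $(\mathbf{c}-\mathbf{c}^L,\cdot)$, with total claim $c_N-\sum_i c^L_i$, turns this into $P(\mathbf{c}-\mathbf{c}^L,e)$.

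The main obstacle is the admissible range of $L$. Throughout, $L$ stands for $\min\{L,\lambda\}$, and $\lambda$ depends on the estate. In the dual this means the estate $c_N-e$, so the truncation in $(\psi^L)^d$ is governed by the deficit. The identity in Step 3, and Step 2's axiom being exactly satisfiable, both require $e\le c_N-\sum_i c^L_i$, i.e. $L\le\lambda(\mathbf{c},c_N-e)$. For larger $L$ I would apply the same truncation convention to the dual, replacing $L$ by the dual threshold $\min\{L,\lambda(\mathbf{c},c_N-e)\}$. I would then check that with this convention the equivalences of Steps 1 and 2 still hold problem by problem. Step 1 is unaffected, since $\lambda$ depends only on $c_N$ and $e$.
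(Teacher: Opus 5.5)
Your proposal is correct and follows the paper's own route: the paper's proof is just the remark that $NAR_L$ is self-dual and SUBL$_L$ is the dual of SLBA$_L$, so Theorem~\ref{th1} transfers through $\varphi\mapsto\varphi^d$, and your Steps~1--3 actually carry out those verifications. You are also right that the displayed SUBL$_L$ literally restates SLBA$_L$, so the intended axiom is the loss lower bound $c_i-\varphi_i(\mathbf{c},e)\ge\min\{c_i,L\}$ (the paper's own phrase is ``a lower bound on these losses''), though note that $\lambda$ depends on all of $\mathbf{c}$, not only on $c_N$ and $e$; Step~1 still holds because the dual threshold at $(\mathbf{c},e)$ is by definition the primal threshold at $(\mathbf{c},c_N-e)$, so both versions of $NAR_L$ involve the same $S_0$ and the same admissible reallocations, which keep $c_N$ and hence the deficit fixed.
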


The characterization follows from the duality of the axioms that define the original family. Importantly, $NAR_L$ is \emph{self-dual}; hence, combining $NAR_L$ with SUBL$_L$ characterizes precisely the dual P--CEA family.

Similarly, since \emph{Non-subsistence Decentralizability} is self-dual, 
its combination with the dual lower-bound axiom characterizes the dual 
family of compromise rules. This yields the following corollary.

\begin{corollary}
A family of rules satisfies \emph{Non-subsistence Decentralizability} 
and \emph{SUBL$_L$} 
if and only if it is the dual of the P--CEA family of compromise rules.
\end{corollary}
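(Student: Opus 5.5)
Throughout I take the dual family to be what the definition actually gives, namely $\psi^L_{\text{dual}}(\mathbf{c},e)=\mathbf{c}-\psi^L(\mathbf{c},c_N-e)$. I will prove the corollary by transporting the earlier corollary (Non-subsistence Decentralizability together with SLBA$_L$ characterizes the P--CEA family) through the duality operator $\varphi\mapsto\varphi^d$. Since this operator is an involution, it is enough to prove two facts. First, $\varphi$ satisfies Non-subsistence Decentralizability if and only if $\varphi^d$ does. Second, $\varphi$ satisfies SUBL$_L$ if and only if $\varphi^d$ satisfies SLBA$_L$.

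Once both facts hold, the chain is immediate. A rule $\varphi$ satisfies the two dual axioms exactly when $\varphi^d$ satisfies Non-subsistence Decentralizability and SLBA$_L$. By the earlier corollary this happens exactly when $\varphi^d=\psi^L$, that is, when $\varphi=(\psi^L)^d=\psi^L_{\text{dual}}$.

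\textbf{Decentralizability.} Suppose $\varphi_i(\mathbf{c},e)=t_i(c_i;c_N;e;\mathbf{c}_{\text{sub}})$. Then
\[
\varphi^d_i(\mathbf{c},e)=c_i-t_i(c_i;c_N;c_N-e;\mathbf{c}_{\text{sub}}),
\]
which again depends only on $c_i$, $c_N$, $e$ and $\mathbf{c}_{\text{sub}}$. The key point is that $\mathbf{c}$ is the same in a problem and its dual. Hence $k$ and $S_0=\{i:c_i\le L\}$ are unchanged, so the subsistence vector is identical in both problems. Applying the same argument to $\varphi^d$ gives the converse.

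\textbf{Bound axioms.} SLBA$_L$ for $\varphi^d$ at $(\mathbf{c},e)$ reads $c_i-\varphi_i(\mathbf{c},c_N-e)\ge \min\{c_i,L\}$. Rearranging, this is
\[
\varphi_i(\mathbf{c},c_N-e)\le c_i-\min\{c_i,L\}=\max\{c_i-L,0\}.
\]
As $e$ ranges over admissible estates, so does $c_N-e$. So this condition is a bound on the award $\varphi_i$ by $\max\{c_i-L,0\}$, not on the loss $c_i-\varphi_i$.

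\textbf{Main obstacle.} The difficulty is that SUBL$_L$, as written, bounds the loss $c_i-\varphi_i$ by $\max\{c_i-L,0\}$, which is equivalent to $\varphi_i\ge\min\{c_i,L\}$, i.e.\ exactly SLBA$_L$. Under that literal reading the duality fact fails, and the corollary would reduce to the earlier corollary with conclusion $\psi^L$, not $\psi^L_{\text{dual}}$.

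I would first settle the intended reading from the formula for the dual family. The definition gives $\psi^L_{\text{dual}}=P(\mathbf{c}-\mathbf{c}^L,e)$, so every award satisfies $\varphi_i\le c_i-c_i^L$. This indicates that the intended axiom is the dual one: each agent's loss is at least $\min\{c_i,L\}$, equivalently $\varphi_i\le\max\{c_i-L,0\}$. I would state the axiom in that corrected form before running the argument.

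The second check is a domain issue. The earlier corollary is applied to $\varphi^d$ at $(\mathbf{c},c_N-e)$, and the convention $L=\min\{L,\lambda\}$ depends on the estate. I need this truncation to transform correctly under duality, so that "$\psi^L$ at estate $c_N-e$" is read with the truncation computed at estate $c_N-e$.

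If that fails, I would prove the result directly instead. I would rerun the Cauchy-equation argument of Theorem~\ref{th1} on losses $\ell_i=c_i-\varphi_i$. With the corrected axiom, the agents in $S_0$ are pinned to loss $c_i$, and each remaining agent's loss is $L$ plus an additive function of $c_i-L$. This yields losses $L+m(c_i-L)$, with $m$ fixed by budget balance.
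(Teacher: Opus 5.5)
Your reduction is the paper's argument: the paper supports this corollary only by noting that Non-subsistence Decentralizability is self-dual and that SUBL$_L$ is the dual of SLBA$_L$, so the earlier corollary transfers through the involution $\varphi\mapsto\varphi^d$, exactly as you set it up. Your diagnosis is also correct and is the substantive point: since $c_i-\max\{c_i-L,0\}=\min\{c_i,L\}$, the printed SUBL$_L$ is identical to SLBA$_L$, so the printed statement is false (e.g.\ $\psi^L_{\text{dual}}=P(\mathbf{c}-\mathbf{c}^L,e)$ awards $0$ to any agent with $0<c_i\le L$), and both your argument and the paper's go through only for the corrected bound $\varphi_i\le\max\{c_i-L,0\}$ (losses at least $\min\{c_i,L\}$, as the paper's own prose says), with $L$ truncated at $\lambda(\mathbf{c},c_N-e)$ --- or, most cleanly, for a fixed $L$ on the mirror domains $\sum_i\min\{c_i,L\}\le e$ and $\sum_i\min\{c_i,L\}\le c_N-e$, where $S_0$ depends only on $\mathbf{c}$, as your decentralizability step assumes.
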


\section{Conclusion}

This paper introduces the P--CEA family of compromise rules for resolving claims problems in which total claims exceed the available estate. The family reconciles two foundational principles of fair division—proportionality and constrained equal awards—by assigning each agent a fixed, claim-bounded baseline and distributing the remaining estate proportionally to residual claims. Varying the parameter $L$ generates a continuum of rules that smoothly interpolates between the proportional and CEA benchmarks.

We provide an axiomatic characterization of this family based on suitably adapted fairness principles. In particular, we show that the P--CEA rules are precisely those that satisfy \emph{No Advantageous Reallocation Beyond Claims Below $L$} together with \emph{Sustainable Lower Bounds on Awards}. These axioms jointly capture two normative concerns: discouraging strategic inflation of claims while guaranteeing a protected minimum level of allocation. The resulting characterization establishes the P--CEA family as a principled and flexible alternative to classical division rules, capable of balancing equity and claim sensitivity in a transparent manner.

We further analyze the dual family of rules, which reallocates losses rather than awards, and characterize it using the corresponding dual axioms. The fact that the same structural principles govern both awards and losses reveals a deeper symmetry underlying claims problems and underscores the robustness of the P--CEA framework.

Beyond its theoretical contribution, the P--CEA family has clear practical relevance. Its two-component structure—a guaranteed minimum combined with proportional adjustment—closely mirrors allocation mechanisms observed in climate policy, income distribution schemes, sports revenue sharing, and the rationing of essential goods. We also compare the family with related rules, including Young rules and the $\alpha_{\min}$-egalitarian rule, and show that P--CEA rules satisfy key normative properties such as consistency, Lorenz dominance, and composition principles.

Overall, the P--CEA family provides a coherent and adaptable approach to fair allocation under scarcity. Future research may extend this framework to dynamic environments, stochastic claims, or settings with network interdependencies, where entitlements evolve over time and fairness considerations interact with strategic behavior and uncertainty.
\newpage

\bibliography{ref}

@article{bosmans2011lorenz,
  title={Lorenz comparisons of nine rules for the adjudication of conflicting claims},
  author={Bosmans, Kristof and Lauwers, Luc},
  journal={International Journal of Game Theory},
  volume={40},
  pages={791--807},
  year={2011},
  publisher={Springer}
}

@article{curiel1987bankruptcy,
  title={Bankruptcy games},
 author={Curiel, Imma J and Maschler, Michael. and Tijs, Stef H},
  journal={Zeitschrift f{\"u}r operations research},
  volume={31},
  pages={A143--A159},
  year={1987},
  publisher={Springer}
}

@article{duro2020allocation,
  title={The allocation of CO2 emissions as a claims problem},
  author={Duro, Juan Antonio and Gim{\'e}nez-G{\'o}mez, Jos{\'e}-Manuel and Vilella, Cori},
  journal={Energy Economics},
  volume={86},
  pages={104652},
  year={2020},
  publisher={Elsevier}
}

@article{gimenez2014proportional,
  title={A proportional approach to claims problems with a guaranteed minimum},
  author={Gim{\'e}nez-G{\'o}mez, Jos{\'e}-Manuel and Peris, Josep E},
  journal={European Journal of Operational Research},
  volume={232},
  number={1},
  pages={109--116},
  year={2014},
  publisher={Elsevier}
}

@article{gimenez2016global,
  title={The global carbon budget: a conflicting claims problem},
  author={Gim{\'e}nez-G{\'o}mez, Jos{\'e}-Manuel and Teixid{\'o}-Figueras, Jordi and Vilella, Cori},
  journal={Climatic change},
  volume={136},
  pages={693--703},
  year={2016},
  publisher={Springer}
}

@article{ju2021fair,
  title={Fair international protocols for the abatement of GHG emissions},
  author={Ju, Biung-Ghi and Kim, Min and Kim, Suyi and Moreno-Ternero, Juan D},
  journal={Energy Economics},
  volume={94},
  pages={105091},
  year={2021},
  publisher={Elsevier}
}

@article{meinshausen2009greenhouse,
  title={Greenhouse-gas emission targets for limiting global warming to 2 C},
  author={Meinshausen, Malte and Meinshausen, Nicolai and Hare, William and Raper, Sarah CB and Frieler, Katja and Knutti, Reto and Frame, David J and Allen, Myles R},
  journal={Nature},
  volume={458},
  number={7242},
  pages={1158--1162},
  year={2009},
  publisher={Nature Publishing Group UK London}
}

@article{moulin1985egalitarianism,
  title={Egalitarianism and utilitarianism in quasi-linear bargaining},
  author={Moulin, Herv{\'e}},
  journal={Econometrica: Journal of the Econometric Society},
  pages={49--67},
  year={1985},
  publisher={JSTOR}
}

@article{moulin2002axiomatic,
  title={Axiomatic cost and surplus sharing},
  author={Moulin, Herve},
  journal={Handbook of Social Choice and Welfare},
  volume={1},
  pages={289--357},
  year={2002},
  publisher={Elsevier}
}

@article{o1982problem,
  title={A problem of rights arbitration from the Talmud},
  author={O'Neill, Barry},
  journal={Mathematical Social Sciences},
  volume={2},
  number={4},
  pages={345--371},
  year={1982},
  publisher={Elsevier}
}

@article{ipcc2014ipcc,
  title={IPCC Fifth Assessment Report—Synthesis Report},
  author={IPCC},
  journal={IPPC Rome, Italy},
  year={2014}
}

@book{thomson2019,
  title={How to divide when there isn't enough},
  author={Thomson, William},
  year={2019},
  publisher={Cambridge University Press}
}

@book{aczel1966lectures,
  title={Lectures on functional equations and their applications},
  author={Acz{\'e}l, J{\'a}nos},
  year={1966},
  publisher={Academic press}
}

@article{aumann1985game,
  title={Game theoretic analysis of a bankruptcy problem from the Talmud},
  author={Aumann, Robert J and Maschler, Michael},
  journal={Journal of economic theory},
  volume={36},
  number={2},
  pages={195--213},
  year={1985},
  publisher={Elsevier}
}

@book{herrero1998preeminence,
  title={Preeminence and sustainability in bankruptcy problems},
  author={Herrero, Carmen and Villar, Antonio},
  year={1998},
  publisher={Instituto Valenciano de Investigaciones Econ{\'o}micas}
}

@article{thomson2015a,
  title={For claims problems, compromising between the proportional and constrained equal awards rules},
  author={William Thomson},
  journal={Economic Theory},
  volume={60},
  pages={495--520},
  year={2015},
  publisher={Springer}
}

@article{thomson2015b,
  title   = {For claims problems, another compromise between the proportional and constrained equal awards rules},
  author  = {Thomson, William},
  journal = {Journal of Dynamics and Games},
  volume  = {2},
  number  = {3--4},
  pages   = {363--382},
  year    = {2015},
  issn    = {2164-6074}
}

@article{thomson2007existence,
  title={On the existence of consistent rules to adjudicate conflicting claims: a constructive geometric approach},
  author={Thomson, William},
  journal={Review of Economic Design},
  volume={11},
  number={3},
  pages={225--251},
  year={2007},
  publisher={Springer}
}

@book{moulin1991axioms,
  title={Axioms of cooperative decision making},
  author={Moulin, Herv{\'e}},
  year={1991},
  publisher={Cambridge University Press}
}

@article{moulin1992welfare,
  title={Welfare bounds in the cooperative production problem},
  author={Moulin, Herv{\'e}},
  journal={Games and Economic Behavior},
  volume={4},
  number={3},
  pages={373--401},
  year={1992},
  publisher={Elsevier}
}
\end{document}